\newcolumntype{P}[1]{>{\centering\arraybackslash}p{#1}} % Centered column with fixed width
\newcommand{\field}[1]{\ensuremath{\mathbb{#1}}}
\newcommand{\R}{\ensuremath{\field{R}}} % real numbers
\newcommand{\defeq}{\ensuremath{\triangleq}}
\newcommand{\subjectto}{\text{\rm subject to}} % subject to
\DeclareMathOperator{\diag}{diag}
\DeclareMathOperator*{\argmin}{\mathrm{argmin}}
\DeclareMathOperator*{\argmax}{\mathrm{argmax}}
\newcommand{\minimize}{\ensuremath{\mathop{\mathrm{minimize}}\limits}}
\renewcommand{\thmcontinues}[1]{\hyperref[#1]{continued}}
\tikzstyle{every picture} += [>=stealth]
\tikzset{axis/.style={semithick, line join=miter}}
\def\@seccntformat#1{\csname the#1\endcsname.\quad}
\newcommand{\emailhref}[1]{\href{mailto:#1}{\tt #1}} % hyperlinked email address
\newcommand{\hidefastcompile}[1]{\ifthenelse{\boolean{fastcompile}}{}{#1}}
\definecolor{orange}{rgb}{0.85,0.33,0.13} % 217,85,33
\definecolor{green}{rgb}{0.13,0.85,0.33}
\definecolor{purple}{rgb}{0.33,0.13,0.85}
\definecolor{lime}{rgb}{0.65,0.85,0.13}
\definecolor{blue}{rgb}{0.13,0.65,0.85}
\pgfplotsset{colormap={tricolormap}{color=(orange) color=(green) color=(purple)},
  colormap={quadcolormap}{color=(orange) color=(lime) color=(blue) color=(purple)}}
\pgfplotsset{compat=1.15}
  \renewcommand{\todo}[2][1]{}
  \newcommand{\deledit}[1]{}
  \newcommand{\deledit}[1]{{\color{orange} \sout{#1}}}
\spnewtheorem{assumption}{Assumption}{\bfseries}{\itshape}
\crefname{assumption}{assumption}{assumptions}
\newcommand{\aim}{\ensuremath{\mathsf{aim}}}
\title{\bf\sffamily Optimal Dynamic Fees for Blockchain Resources}
  \author{Anonymous Authors}
  \author{
    Davide Crapis \\
    Robust Incentives Group\\
    Ethereum Foundation \\
    \emailhref{davide@ethereum.org} \\
    \and
    Ciamac C. Moallemi \\
    Graduate School of Business \\
    Columbia University \\
    \emailhref{ciamac@gsb.columbia.edu} \\
    \and
    Shouqiao Wang \\
    Graduate School of Business \\
    Columbia University \\
    \emailhref{shwang27@gsb.columbia.edu} \\
  }
  \date{Initial version: May 11, 2023 \\
    Current version: September 20, 2023
  }
\begin{document}
\maketitle
%\singlespacinWe only aim to find a price update policy that minimizes resource demand deviation from the target $t$.g

\begin{abstract}
  % We develop a linear-quadratic framework to address the problem of the design of dynamic fee mechanisms such as EIP-1559 and multi-dimensional generalizations such as EIP-4844. Our proposed policies optimally trade-off between adjusting resource prices to handle persistent demand shifts versus being robust to local noise in the observed block demand. In the general multi-dimensional case, our framework correctly handles cross-effects (complementarity and substitutability) in blockspace resource demand. In case studies, we demonstrate how our framework can be used to refine or inform the design of heuristic fee update rules such as EIP-1559 or EIP-4844. Beyond current designs, our contributions can guide the development of simple yet efficient policies for pricing a higher number of blockchain resources.
  We develop a general and practical framework to address the problem of the optimal design of dynamic fee mechanisms for multiple blockchain resources. Our framework allows to compute policies that optimally trade-off between adjusting resource prices to handle persistent demand
  shifts versus being robust to local noise in the observed block demand. In the general case with more than one resource, our optimal policies correctly handle cross-effects (complementarity and substitutability) in resource demands. We also show how these cross-effects can be used to inform resource design, \textit{i.e.} combining resources into bundles that have low demand-side cross-effects can yield simpler and more efficient price-update rules. Our framework is also practical, we demonstrate how it can be used to refine or inform the design of heuristic fee update rules such as EIP-1559 or EIP-4844 with two case studies. We then estimate a uni-dimensional version of our model using real market data from the Ethereum blockchain and empirically compare the performance of our optimal policies to EIP-1559.
\end{abstract}

%\onehalfspacing

\section{Introduction}

%%% Local Variables:
%%% mode: latex
%%% TeX-master: "lqr-tx-fees"
%%% End:

Users of public permissionless blockchains can modify the shared state of the network through \textit{transactions} that are executed by a set of nodes with limited computational resources. To allocate resources among competing transactions most blockchains use \textit{transaction fees}. Initial transaction fee mechanisms in the Bitcoin and Ethereum blockchains relied on
users bidding for transaction inclusion as the main way of pricing congestion. Moreover, all computational resources were bundled into a unique virtual resource (``gas'') with fixed relative prices hardcoded in the protocol. Current R\&D efforts are focused on improving transaction fee markets along two directions: (1) setting a minimum \textit{dynamic base fee} (henceforth also called \textit{price}) that is adjusted by the protocol as function of user demand and (2) \textit{unbundling resources} so that different resources can be individually priced and their relative prices can also efficiently adjust with demand.

In this paper, we propose a new framework for choosing a resource pricing policy that makes significant progress across both directions. We consider the practical problem of a blockchain protocol that has to jointly update the prices of multiple resources at every block. We assume that the type of resources being metered and priced, as well as the block limits and sustainable targets for each resource, are pre-determined. These higher level decisions are the outcome of a design process that has interesting political, economic, and engineering considerations but are outside the current scope of our framework\footnote{We note that some of our results do offer some insights on the resource design problem that we will briefly discuss.}.

Our framework is both general and practical. Or main results characterize theoretically optimal policies in a realistic setting with multiple resources and time-varying demand. Our results can be used in two ways: (i) the policies can be \textit{directly} implemented as we demonstrate, or (ii) insights from our main results can be used to construct and refine heuristics that approximate optimal policies. The latter point is particularly important in the blockchain environment, where, especially at Layer 1, the price computation itself is significantly resource constrained\footnote{Layer 2s, depending on their architecture, can perhaps implement price policies that require more computation and are closer to the optimal ones.}. We designed our framework with the following properties in mind:

\begin{enumerate}
\item[$\bullet$] \textit{Simplicity}: the price update algorithm should be simple enough that it can potentially be implemented without error and tested by multiple clients, the price computation should not consume significant network resources, and the algorithm should be future-proof (it should work with minimal to no upgrades required).
\item[$\bullet$] \textit{Robustness}: the price updates should be robust to sudden and substantial shifts in resource demand as well as fee manipulation attacks (\textit{e.g.}, a full block attack).
\item[$\bullet$] \textit{Optimality}: the performance of the algorithm should satisfy some criterion of optimality.
\end{enumerate}

Our main theoretical contributions are the design of a flexible modeling framework that satisfies most of these properties and the characterization of optimal dynamic pricing policies with multiple resources and time-varying demand under a diverse set of assumptions. In particular:
\begin{enumerate}
\item Our framework allows to specify an arbitrary number of resources to be priced, it models the aggregate resource demand dynamics as a system of linear equations and the realized block resource demand as a system of linear functions of resource prices. It uses as objective function the quadratic deviation of block usage from sustainable target at each block plus a regularization term for controlling price fluctuations. In this way, we cast the blockchain resource pricing problem into the linear-quadratic framework that has been studied in the optimal control literature and widely used in practical applications, from rocket control to congestion pricing in ride-sharing networks.
\item We derive the optimal pricing policy for the most generic formulation with \textit{n} resources and price regularization. The price update optimally handles the trade-off between adjusting prices in response of persistent resource demand shifts versus being robust to local noise in the observed block demand. Moreover, it correctly takes into account the cross-effects in the joint price updates -- \textit{i.e.}, accounting that the demand for resource A will be suppressed (increased) as a result of a price increase of resource B when these resources are complementary (substitutable), and vice-versa.
\item We present a novel decomposition of the resource pricing problem into a set of independent problems, one for each resource. \textit{We use the structure of the demand function to define what are effectively bundles of resources}, whose optimal prices can be computed independently. We call these bundles eigenresources and their prices eigenprices. Our result shows that when this decomposition is possible, optimal prices can be constructed from simple independent update rules. Moreover, the result offers actionable insights to protocol researchers that are thinking about market design for multiple resources.
\end{enumerate}

We also take the optimal policies to real market data. We collect historical data on fourteen days and close to one hundred thousand blocks of the Ethereum blockchain. We estimate our model parameters using the EM algorithm and then compute the policies that we characterized in our theoretical results. We introduce an evaluation framework where we compare our policies to EIP-1559 across a set of performance metrics. The policies perform well across the set of metrics, with the regularized policy beating the benchmark on all metrics. We have reason to believe that our framework can perform even better in multi-dimensional settings with significant cross-effects between resources and we plan to extend this empirical analysis to two-dimensions as real market data on EIP-4844 becomes available.

On top of providing optimal pricing policies and algorithms to compute them, we also show how they can be used to inspire or refine simpler heuristics that are used in practice. We present two case studies for EIP-1559 and EIP-4844. In the case of EIP-1559, we derive a heuristic based on constant price sensitivity and show how the \textit{price update} can be efficiently computed as a function of the resource \textit{demand estimate}. The former is already computed by most blockchains that use an EIP-1559 type formula to update base fees at every block, where usually the latest block demand is used as a na$\ddot{\i}$ve single point estimate of demand. Our heuristic only requires to maintain one additional variable, the demand estimate, and both resource demands and prices can be updated after each block with a simple linear update that only uses local information.

% Our framework can thus be used \textit{directly}, to compute an optimal price update rules at the desired level of sophistication, or \textit{indirectly}, to inspire/refine simpler heuristics.

A final thing to clarify is that the linear-quadratic optimization techniques we use to derive our optimal policy have been widely studied and used. Our contribution is to cast the blockchain resource pricing problem as a variant of the linear-quadratic framework, and define models for the resource demand dynamics and demand system that fit in the framework and yet can accommodate for many practical scenarios. We also note that the relation between EIP-1559 type rules and optimization of dynamical systems has been recently studied in the literature as we discuss in the next section. The novelty of our framework is to use explicit models for the time-varying demand dynamics, something that has not been considered before. This allows us to derive dynamic pricing policies in a changing environment that depend on primitives of the demand system such as price sensitivity/cross-sensitivity and variance.

%%% Local Variables:
%%% mode: latex
%%% TeX-master: "lqr-tx-fees"
%%% End:

\subsection{Literature Review}

Blockchain transaction fee mechanisms have recently undergone major improvements and are the
subject of current R\&D in the Ethereum ecosystem. The first major set of improvements was
originally researched by \cite{buterin2018blockchain} in a seminal paper that proposed a series of
ground-breaking ideas, among which justifying the use of a dynamic \textit{base fee} that is
updated by the protocol to avoid congestion. This and related ideas such as burning base fees to
align incentives were implemented in the Ethereum protocol\footnote{It has also been forked and
  adopted by many other protocols such as Filecoin, NEAR, and Ethereum L2 chains.} via
EIP-1559. \cite{tim2020eip1559} provides a thorough game-theoretic analysis of this transaction
fee mechanism, and also discusses in depth the base fee update rule focusing on desiderata,
potential attack vectors, and alternative designs. \cite{reijsbergen2022transaction} provides a
performance analysis using on-chain data and explores alternative designs.

The multi-dimensional transaction fee mechanism is a further upgrade that has recently been proposed by \cite{buterin2022multidimensional} and has led to a new wave of interest on this topic. On the applied R\&D side, EIP-4844 implements the first bi-dimensional fee mechanism for blockchains with the creation of the \textit{Ethereum datagas} resource and market which is one of the keys to Ethereum scalability, see \cite{crapis2023eip4844} for an analysis and simulation. 

A few recent papers consider the question of designing optimal fee update rules. Perhaps the
closest to our work is that of \cite{diamandis2022dynamic} which focuses on the multi-dimensional
case and derives optimal prices as the solution of a welfare maximization problem and a general
class of convex functions for the network cost. While the above focuses on a \emph{static} problem
with full information, in our present work we model both demand dynamics and the uncertainty of
the protocol, that has to make optimal decisions based on noisy signals of time-varying demand. \cite{leonardos2022optimality} study the optimality of the EIP-1559 price update rule and alternative formulations in the uni-dimensional case through the lens of dynamical systems analysis. \cite{ferreira2021dynamic} study the stability and welfare optimality of a dynamic posted-price mechanism that conditions prices on past block utilization and observable bids, they focus on steady state analysis of a system with one resource and a probabilistic demand distribution that does not change over time.

%%% Local Variables:
%%% mode: latex
%%% TeX-master: "lqr-tx-fees"
%%% End:

\section{Model}\label{sec:model}

In what follows we describe the main components of our framework. We consider a setting with a realistic demand process for the general \textit{n}-dimensional resource pricing problem. We model aggregate resource demand at every block, in discrete time indexed by $k$, and describe the protocol decision problem.

\medskip
\noindent\textbf{\sffamily Demand dynamics.} The aggregate resource demand $d_k$ of all user
transactions that can potentially be included in block $k$ is governed by mean-reverting process
dynamics with
Gaussian noise given by
\begin{equation}
    d_{k+1} = (I-A^{d})\mu^{d} + A^{d} d_{k} + \epsilon_k^{d}.
\label{d_dynamic}
\end{equation}
Here, $d_k\in\mathbb{R}^n$ is the resource demand vector in block $k$,
$\epsilon_k^{d} \sim \mathcal{N}(0, W^{d})$ is i.i.d.\ Gaussian noise with covariance matrix
$W^{d}\in\mathbb{R}^{n \times n}$, and $\mu^{d}\in\mathbb{R}^n$ and
$A^{d}\in\mathbb{R}^{n \times n}$ are parameters of the demand process.  This process captures the
essential properties of blockchain resource demand. The long-term mean base load of resources is
given by $\mu^{d}$, but demand can spike and deviate to high load and revert to the base load, the
matrix $A^{d}$ regulates the permanence of demand spikes for different resources.\footnote{The
  more general case in which $\mu^{d}$ and $A^{d}$ are time-varying also fit in our framework, we
  consider the constant case for a clearer exposition.}

\medskip
\noindent\textbf{\sffamily Observed demand.} The protocol observes the aggregate resource demand
$y_k \in \mathbb{R}^n$ of all transactions that are included into a valid block. We assume that
$y_k$ is given by the demand model
\begin{equation}\label{eq:obs-demand}
  y_{k} = d_{k} + B_{k} p_{k} + \epsilon_{k}^{y},
\end{equation}
where $p_{k}\in\mathbb{R}^n$ are resource prices (determined by the protocol),
$B_{k}\in\mathbb{R}^{n \times n}$ are price-sensitivity parameters, and $\epsilon_{k}^{y} \sim \mathcal{N}(0, W^{y})$
is i.i.d.\ Gaussian noise with covariance matrix $W^{y}\in\mathbb{R}^{n \times n}$. That is, the
protocol only observes realized demand for all resources, which we model as a linear decreasing
function of resource prices. We also allow for cross-sensitivity, \textit{i.e.}, if the matrix entry
$B_{k,12}$ is non-zero, the price of
resource~$2$ can have a positive (respectively, negative) impact on demand of resource~$1$, if the
the resources are complements (respectively, substitutes).

\medskip
\noindent\textbf{\sffamily Price sensitivity.} The price-sensitivity matrix $B_{k}$ determining
the observed demand $y_{k}$ via  \eqref{eq:obs-demand} evolves according to
\begin{equation}
    \mathbf{vec}(B_{k+1}) = (I-A^{B})\mu^{B}+A^{B}\mathbf{vec}({B_k})+\epsilon_k^{B},
\label{B_dynamic}
\end{equation}
where $\mathbf{vec}(\cdot)$ is a function that reshapes a matrix into a column vector by stacking
its columns, $A^{B}\in\mathbb{R}^{n^2 \times n^2}$, $\mu^{B}\in\mathbb{R}^{n^2}$, and
$\epsilon_k^{B} \sim \mathcal{N}(0, W^{B})$ is i.i.d.\ Gaussian noise with covariance matrix
$W^{B}\in\mathbb{R}^{n^2 \times n^2}$. It follows a mean-reverting process, where the long-term
mean is given by the vector $\mu^{B}$, and the parameters $A^{B}$ quantifies how quickly
$\mathbf{vec}(B_k)$ reverts to the base $\mu^{B}$.

\medskip
\noindent\textbf{\sffamily Price update.} The resource prices for block $k+1$ are decided after observing demand at block $k$. Characterizing optimal prices is the subject of the following section.

\medskip
\noindent\textbf{\sffamily Objective.} Our objective function is a weighted sum of two loss functions. First, we aim to find a pricing policy that minimizes resource demand deviation from the protocol block resource target $t\in\mathbb{R}^n$. We choose the long-term average of infinite-horizon aggregation of a quadratic loss function,
$$\lim_{K \to \infty} \frac{1}{K} \sum_{k=1}^{K} \| y_{k} - t \|_{2}^{2}.$$
Second, we aim to find a price update policy that minimizes price fluctuations over time. We choose the long-term average of infinite-horizon quadratic variation of the price process as our second loss function,
$$\lim_{K \to \infty} \frac{1}{K} \sum_{k=1}^{K} \| p_{k} - p_{k-1} \|_{2}^{2}.$$
We set $\lambda \geq 0$ as the weight for these two loss functions. We can get our objective function to be
$$\lim_{K \to \infty} \frac{1}{K} \sum_{k=1}^{K} \left( \| y_{k} - t \|_{2}^{2} + \lambda \| p_{k} - p_{k-1} \|_{2}^{2} \right).$$
The total loss assigns increasingly high penalty to higher deviation from target demand and more
significant fluctuations in price process.\footnote{We can also specify weights and penalize
  deviation of resources differently, but we assume uniform weights or alternatively that resource
  measurements are already properly re-scaled to focus on the fundamental trade-offs between
  resources.} Note that this is a problem that involves uncertainty, in the next section we show
how we can apply stochastic control techniques to find an optimal policy that minimizes the
expected objective function.

%%% Local Variables:
%%% mode: latex
%%% TeX-master: "lqr-tx-fees"
%%% End:

\section{Optimal Policy}
\label{sec-optimal-policy}

In this section, we delve into a comprehensive analysis of the optimal pricing policy. We first
describe the information structure and belief update techniques that are required to solve the
stochastic control problem. We then present two approaches to solve the problem, under different
assumptions. 

\medskip
\noindent\textbf{\sffamily Information structure.} We first define the information structure at
every block $k$. The \textit{hidden state} is given by the vector
$$x_k^\intercal \defeq \left[ d_k^\intercal \  \mathbf{vec}(B_k)^\intercal \right].$$
It evolves according to
$$x_{k+1}=(\mathrm{I}-A^{x})\mu^{x}+A^{x}x_k+\epsilon_k^{x}, \quad\quad \epsilon_k^{x} \sim \mathcal{N}(0, W^{x}),$$
where the parameters $(A^{x},\mu^{x},W^{x})$ can be determined from the parameters
$(A^d,\mu^d,W^d,A^B,\mu^B,W^B)$ based on equations \eqref{d_dynamic} and \eqref{B_dynamic}.  Here,
we allow the dependencies between $\epsilon_k^{d}$ and $\epsilon_k^{B}$, so that the noise term
$\epsilon_k^{x}$ is i.i.d.\ Gaussian noise with covariance matrix $W^{x}$.  The
\textit{observation equation} is
\begin{equation}\label{eq:obs}
  y_k=d_k+B_k p_k+\epsilon_k^{y}, \quad\quad \epsilon_k^{y} \sim \mathcal{N}(0, W^{y}),
\end{equation}
where $\epsilon_k^{y}$ is i.i.d.\ and independent with $\epsilon_k^{x}$.
The \textit{information available} to the controller is
$$I_k \defeq \sigma(y_0, y_1, ..., y_k, p_0, p_1,..., p_k).$$
We require that policies be adapted to the filtration $\{ I_k\}$.

\medskip
\noindent\textbf{\sffamily Belief updates.} We use the Kalman filter to derive the formulas for
the Bayesian update of beliefs on the state $x_k$ and on future observation $y_k$. The observation
equation \eqref{eq:obs} can also be expressed as
$$y_k = C_k x_k + \epsilon_k^{y},$$
for an appropriate choice of the matrix $C_k$ (depending on the price vector $p_k$).
Using a conjugate Gaussian prior, denote the prior at $k$ by
$$x_{k-1} \sim \mathcal{N}(\hat x_{k-1}, \hat \Sigma_{k-1}).$$
The \textit{predictive distribution} of $x_k$ is Gaussian $\mathcal{N}(a_k,S_k)$ with parameters
\[
  \begin{split}
    a_k &= \mathbf{E}(x_k|I_{k-1})=(\mathrm{I}-A^{x}) \mu^{x} + A^{x} \hat x_{k-1},\\
    S_k &= \mathbf{Var}(x_k|I_{k-1})=A^{x} \hat \Sigma_{k-1} (A^{x})^\intercal + W^{x}.
  \end{split}
\]
The \textit{predictive distribution} of $y_k$ is Gaussian $\mathcal{N}(f_k,F_k)$ with parameters
\[
  \begin{split}
    f_k&=\mathbf{E}(y_k|I_{k-1}) = C_k a_k,\\
    F_k&=\mathbf{Var}(y_k|I_{k-1}) = C_k S_k C_k^\intercal + W^{y}.
  \end{split}
\]
The \textit{posterior distribution} of $x_k$ is Gaussian $\mathcal{N}(\hat x_k,\hat\Sigma_k)$ with parameters
\[
  \begin{split}
    \hat x_k &= \mathbf{E}(x_k|I_{k})=(\mathrm{I}-K_k C_k)a_k+K_k y_k,\\
    \hat \Sigma_k &= \mathbf{Var}(x_k|I_{k})=(\mathrm{I}-K_k C_k)S_k.
  \end{split}
\]
where $K_k\defeq S_k C_k^\intercal(C_k S_k C_k^\intercal+W^{y})^{-1}$ is the Kalman gain. This is the standard Kalman filter update which computes the belief on the current hidden state given the current information.

\medskip
\noindent\textbf{\sffamily Optimization problem.} In the following results, we will characterize
control policies that seek to minimize the expected long-term average cost, solving the problem
\begin{equation}
  \begin{array}{ll}
    \minimize_{p} & \displaystyle
                    J \defeq \lim_{K \to \infty} \mathbf{E} \left\{
                     \frac{1}{K} \sum_{k=1}^{K} \left( \| y_{k} - t \|_{2}^{2} +\lambda \| p_{k} - p_{k-1} \|_{2}^{2} \right) \right\} \\
    \subjectto & \text{\rm $p_{k+1}$ is $I_{k}$-adapted}.
  \end{array}
  \label{problem}
\end{equation}

\subsection{Target Loss Objective}
%\medskip
%\noindent\textbf{\sffamily Optimal policy under target loss only.}
Our first result focuses on the case where we only aim to find a price update policy that
minimizes resource demand deviation from the target $t$. In particular, we do not add any penalty
for fluctuations in the price process and we make the following assumption.

\medskip
\begin{assumption}
\label{assumption_lambda_0}
    $\lambda=0$.
\end{assumption}

\medskip
\noindent The following theorem characterizes the optimal policy under Assumption~\ref{assumption_lambda_0}.
\begin{theorem}
  %   Consider the objective of minimizing the expected long-term average cost,
  % \[
  %   \begin{array}{ll}
  %     \minimize_{p} & \displaystyle
  %                     J \defeq \lim_{K \to \infty} \mathbf{E} \left\{
  %                      \frac{1}{K} \sum_{k=1}^{K} \| y_{k} - t \|_{2}^{2} \right\} \\
  %     \subjectto & \text{\rm $p_{k+1}$ is $I_{k}$-adapted}.
  %   \end{array}
  % \]
  Consider Problem \eqref{problem} and suppose Assumption \ref{assumption_lambda_0} holds. Then, the optimal policy updates prices according to the rule
  $$p_{k+1}^* = \left[\mathbf{E}\left(B_{k+1}^\intercal B_{k+1}|I_k\right)\right]^{-1}\mathbf{E}\left(B_{k+1}^\intercal (t-d_{k+1})|I_k\right).$$
\label{thm-lambda-0}
\end{theorem}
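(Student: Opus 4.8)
The plan is to exploit the fact that when $\lambda=0$, the objective decouples across time: the long-run average cost $\lim_{K\to\infty}\frac1K\sum_{k=1}^K \mathbf{E}\|y_k-t\|_2^2$ is a sum of per-period terms, and the price $p_{k+1}$ chosen at time $k$ (using information $I_k$) influences only the term $\mathbf{E}\|y_{k+1}-t\|_2^2$ and no other term in the sum (since future demand $d_{k+2},\dots$ and future price sensitivities $B_{k+2},\dots$ evolve autonomously and are unaffected by $p_{k+1}$, and $y_{k+1}$ itself only enters the $k+1$ term). Hence the infinite-horizon problem separates into independent one-period problems: for each $k$, choose the $I_k$-measurable random vector $p_{k+1}$ to minimize $\mathbf{E}\big(\|y_{k+1}-t\|_2^2 \,\big|\, I_k\big)$.

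First I would write $y_{k+1} = d_{k+1} + B_{k+1}p_{k+1} + \epsilon_{k+1}^y$ and condition on $I_k$. Since $p_{k+1}$ is $I_k$-measurable it can be treated as a constant inside the conditional expectation, and $\epsilon_{k+1}^y$ is mean-zero and independent of everything in $I_k$ and of $(d_{k+1},B_{k+1})$, so it contributes only an additive constant $\operatorname{tr}(W^y)$ to the conditional cost. Expanding the quadratic,
\[
\mathbf{E}\big(\|d_{k+1}+B_{k+1}p_{k+1}-t\|_2^2 \,\big|\, I_k\big)
= p_{k+1}^\intercal \mathbf{E}(B_{k+1}^\intercal B_{k+1}\mid I_k)\, p_{k+1}
- 2\,p_{k+1}^\intercal \mathbf{E}\big(B_{k+1}^\intercal(t-d_{k+1})\mid I_k\big) + \text{const},
\]
where the constant does not depend on $p_{k+1}$. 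This is a convex quadratic in $p_{k+1}$ (the Hessian $2\,\mathbf{E}(B_{k+1}^\intercal B_{k+1}\mid I_k)$ is positive semidefinite, and positive definite under the implicit invertibility hypothesis), so the first-order condition gives the stated minimizer
\[
p_{k+1}^* = \big[\mathbf{E}(B_{k+1}^\intercal B_{k+1}\mid I_k)\big]^{-1}\,\mathbf{E}\big(B_{k+1}^\intercal(t-d_{k+1})\mid I_k\big).
\]
The conditional moments here are computable from the Kalman-filter predictive distribution $\mathcal{N}(a_{k+1},S_{k+1})$ of $x_{k+1}=(d_{k+1}^\intercal,\mathbf{vec}(B_{k+1})^\intercal)^\intercal$ given $I_k$, since $\mathbf{E}(B_{k+1}^\intercal B_{k+1}\mid I_k)$ and $\mathbf{E}(B_{k+1}^\intercal(t-d_{k+1})\mid I_k)$ are (at most quadratic) functions of the mean and covariance of a Gaussian.

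The main obstacle — and the point that needs care rather than a long calculation — is justifying the time-decoupling rigorously in the infinite-horizon average-cost setting: one must argue that choosing each $p_{k+1}$ to greedily minimize its own conditional term is without loss of optimality, i.e.\ that no policy gains by sacrificing cost at block $k+1$ to reduce cost later. This follows because the only channel by which $p_{k+1}$ affects the future is through the observation $y_{k+1}$ entering the information set $I_{k+1}$, hence the Kalman posterior $(\hat x_{k+1},\hat\Sigma_{k+1})$; but the posterior covariance $\hat\Sigma_{k+1}=(I-K_{k+1}C_{k+1})S_{k+1}$ and the estimate $\hat x_{k+1}=(I-K_{k+1}C_{k+1})a_{k+1}+K_{k+1}y_{k+1}$ — once one accounts for the fact that $y_{k+1}$ is a known affine function of the independent innovation given $I_k$ and the chosen $p_{k+1}$ — carry no residual informational advantage that depends on the choice of $p_{k+1}$: the filtering recursion is such that the distribution of the future belief process, after integrating out the innovations, does not depend on $p_{k+1}$ in a way that the controller can exploit (the standard certainty-equivalence/separation argument for linear-quadratic-Gaussian control with control-independent noise structure). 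I would invoke this separation principle, note that it reduces the problem to the per-period minimizations above, and conclude. A careful write-up would state the separation as a lemma (or cite the standard LQG result it instantiates) and then the quadratic minimization is immediate.
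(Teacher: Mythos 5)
Your core argument is the same as the paper's: write the cost as a finite sum, observe that with $\lambda=0$ the $I_k$-measurable price $p_{k+1}$ enters only the term $\mathbf{E}\,\|y_{k+1}-t\|_2^2$, expand the conditional quadratic in $p_{k+1}$, solve the first-order condition, and pass to the limit $K\to\infty$; the quadratic minimization and the resulting formula are identical to the paper's proof.

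The one place you go beyond the paper is also the one place your reasoning is not right as stated: the separation/certainty-equivalence argument you invoke to justify greedy time-decoupling. Because the hidden state contains $B_{k+1}$, which multiplies the control, the observation matrix $C_{k+1}$ in $y_{k+1}=C_{k+1}x_{k+1}+\epsilon_{k+1}^y$ depends on $p_{k+1}$; hence the Kalman gain and the posterior covariance $\hat\Sigma_{k+1}=(I-K_{k+1}C_{k+1})S_{k+1}$ \emph{do} depend on the chosen price, so the precision of future estimates of $(d,B)$ --- and therefore future per-period costs --- is not control-independent. This is exactly the dual-control feature the paper itself flags when it notes that stochastic $B_k$ puts the problem outside standard LQG, so the classical separation principle does not literally apply, and your claim that ``the distribution of the future belief process does not depend on $p_{k+1}$'' is false in this model. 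To be fair, the paper's own proof does not address this either: it simply asserts that $J_K$ separates into a sum of quadratic terms and optimizes each term, taking greedy (myopic) optimality for granted. So you are not missing anything relative to the paper --- you are in fact more candid about where the subtlety lies --- but in a careful write-up you should either drop the certainty-equivalence claim, or state explicitly that the policy is derived by ignoring the informational (exploration) effect of prices on future beliefs, which is the same implicit step the paper makes.
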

\begin{proof}
The proof is provided in Appendix~\ref{app:proofs}.
\end{proof}

Under Assumption \ref{assumption_lambda_0}, our focus is directed solely towards achieving the primary objective: making the observed, or realized, demand align as closely as possible with our predefined target $t$. This approach can be viewed as a pursuit of optimal demand estimation. We use Kalman filter to facilitate joint updates across our potential demand $d_k$ and the price sensitivity matrix $B_k$. This dynamic filtering technique enables us to incorporate new information into our estimation process. The optimal vector of resource prices for block $k$ is the one that matches the expected demand of each resource to the respective target. Note that in general, the optimal policy necessarily takes into account the expected demand, price sensitivity, and target of resource $j$ when updating the price of resource $i$. Whenever demand-side cross-effects are present the optimal prices need to be set jointly.

\subsection{General Setting: Model Predictive Control}
%\medskip
%\noindent\textbf{\sffamily Optimal policy under deterministic price-sensitivity and model
%  predictive control.}

We now turn to the general problem, in which we have a penalty for fluctuations in the price
process, with weight $\lambda > 0$. This is important because large pricing fluctuations between
adjacent blocks directly degrade user experience. Thus, controlling the magnitude of price update
is a requirement for every policy that can be adopted in practice and it was one of the main
design goals of EIP-1559.

Our framework allows us to flexibly control price fluctuations with the additional regularization
term. However, this increases the computational complexity of the optimal policy. One particular
challenge is that the price-sensitivity matrix $B_k$ is stochastic. Since this multiplies the
decision variable $p_k$, our setting does not fall into the standard framework of
linear-quadratic-Gaussian (LQG) control.

This leads us to to the following heuristic procedure: first, we observe that if the evolution of
price-sensitivity matrix $B_k$ is deterministic, then the problem can be optimally solved using
standard LQG control methods. Then, in the stochastic setting, we repeatedly apply the optimal
deterministic policy, always using the then current estimate for $B_k$, but pretending (for the
purpose of control) that the future evolution of $B_k$ is deterministic --- this is the idea of
model predictive control.

\medskip
\noindent\textbf{\sffamily Optimal policy under deterministic price-sensitivity.}
For the moment, we will make the following assumption:
\begin{assumption}
\label{assumption_deterministic_B}
    The price-sensitivity matrix $B_k$ is deterministic or, in other words, the noise term
    $\epsilon_k^{B} \equiv 0$, for all $k \geq 0$.
\end{assumption}
Under this assumption, the dynamics for $B_k$ are given by
\begin{equation}\label{eq:deterministic_B}
  \mathbf{vec}(B_{k+1}) = (I-A^{B})\mu^{B}+A^{B}\mathbf{vec}({B_k}).
\end{equation}
Using LQG control methods, we are able to derive an optimal policy explicitly,
and the following theorem characterizes the optimal policy:

\begin{theorem}
  Consider problem \eqref{problem} and suppose \Cref{assumption_deterministic_B} holds. Then, the optimal policy updates prices according to the rule
  \begin{equation}\label{eq:p-opt}
    p_{k+1}^* =\left( I-\frac{Q_k}{\lambda} \right) p_k + \frac{Q_k}{\lambda} \aim_k.
  \end{equation}
  Here, $Q_k \in \R^{n\times n}$ is a positive definite matrix capturing the quadratic coefficient
  of the price vector $p_k$ the cost-to-go function (we provide an explicit derivation for this as
  the solution of a Ricatti equation in the Appendix~\ref{app:proofs}).  $\aim_{k}$ is the ``aim'' price defined by
  \begin{equation}\label{eq:aim}
    \aim_{k} = \sum_{s=k+1}^\infty (I-Z_{s})Z_{s-1}Z_{s-2}\cdots Z_{k+1}\Bar{p}_{s},
  \end{equation}
  where $Z_s \defeq \left(Q_s+B_s^\intercal B_s\right)^{-1}Q_s$ is a positive definite matrix, and
  $\Bar{p}_{s}$ is the market clearing price that satisfies
  $$\mathbf{E}(y_s|I_{k},p_s = \Bar{p}_{s})=t.$$
  The sum of the coefficient matrices of the aim price
  satisfies
  $$\sum_{s=k+1}^\infty (I-Z_{s})Z_{s-1}Z_{s-2}\cdots Z_{k+1} = I.$$
\label{thm-lambda-positive}
\end{theorem}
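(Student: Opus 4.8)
The plan is to solve the deterministic-$B$ problem \eqref{problem} as a genuine linear-quadratic regulator, deriving \eqref{eq:p-opt} and \eqref{eq:aim} from the backward Bellman recursion. First I would set up the state. Since $B_k$ is deterministic, the only genuine state variable for the controller is the previous price $p_k$ (demand enters only through its conditional expectation given $I_k$, which the Kalman filter supplies). Writing the per-period cost as $\|y_k - t\|_2^2 + \lambda\|p_k - p_{k-1}\|_2^2$, I would take conditional expectations: $\mathbf{E}(\|y_k - t\|^2 \mid I_{k-1}, p_k) = \|B_k p_k + \mathbf{E}(d_k\mid I_{k-1}) - t\|^2 + \text{const}$, where the constant collects the filtering/observation variances and does not depend on $p_k$. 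Crucially, $B_k p_k + \mathbf{E}(d_k\mid I_{k-1}) - t = B_k(p_k - \bar p_k)$ where $\bar p_k$ is exactly the market-clearing price defined in the statement, so the effective stage cost is the tracking form $\|B_k(p_k-\bar p_k)\|^2 + \lambda\|p_k - p_{k-1}\|^2$. This is a standard LQ tracking problem with a time-varying (but deterministic) target trajectory $\{\bar p_k\}$.

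Next I would run the finite-horizon dynamic program and pass to the limit. Posit a quadratic cost-to-go $V_k(p_k) = p_k^\intercal Q_k p_k - 2 q_k^\intercal p_k + c_k$ (the value as a function of the just-set price $p_k$, before the next period's cost is incurred). The Bellman equation is
\[
  V_k(p) = \min_{p'}\left\{ \|B_{k+1}(p'-\bar p_{k+1})\|^2 + \lambda\|p' - p\|^2 + V_{k+1}(p') \right\}.
\]
Expanding and setting the gradient in $p'$ to zero gives the first-order condition $(B_{k+1}^\intercal B_{k+1} + \lambda I + Q_{k+1}) p'^* = B_{k+1}^\intercal B_{k+1}\bar p_{k+1} + \lambda p + q_{k+1}$. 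Substituting back and matching quadratic terms yields the Riccati recursion $Q_k = \lambda I - \lambda^2(B_{k+1}^\intercal B_{k+1} + \lambda I + Q_{k+1})^{-1}$, which by the matrix-inversion/Schur-complement identity can be rewritten so that $Q_k/\lambda = I - \lambda(\cdots)^{-1}$ and, after a short manipulation, the optimal price takes the affine form $p^*_{k+1} = (I - Q_k/\lambda)p_k + (Q_k/\lambda)\aim_k$, identifying the "aim" as $\aim_k = (Q_k)^{-1} q_k$ from the linear term. The linear recursion for $q_k$ (equivalently $\aim_k$), obtained by matching first-order terms in the Bellman equation, is a backward affine recursion driven by the $\bar p_s$; unrolling it telescopes into the series \eqref{eq:aim}, where $Z_s = (Q_s + B_s^\intercal B_s)^{-1}Q_s$ appears naturally as the gain mapping that propagates the linear term. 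The identity $\sum_{s>k}(I-Z_s)Z_{s-1}\cdots Z_{k+1} = I$ then follows by a telescoping argument: the partial sums equal $I - Z_m Z_{m-1}\cdots Z_{k+1}$, and the tail product vanishes in the limit because each $Z_s$ is a contraction (it is positive definite with $Z_s \preceq I$, strictly whenever $B_s \neq 0$, since $Q_s \preceq Q_s + B_s^\intercal B_s$); standard stabilizability/detectability of the LQ problem guarantees the product converges to zero and all limits exist.

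I expect the main obstacle to be two intertwined technical points rather than the algebra. First, establishing existence and uniqueness of the limiting (stationary-in-structure but time-varying-in-$k$) solution of the Riccati recursion and of the aim series: because $B_k$ is deterministic but genuinely time-varying, one cannot invoke the textbook algebraic Riccati equation and must instead argue convergence of the backward recursion from below, monotonicity/boundedness of $Q_k$, and a uniform contraction bound on the $Z_s$ to get absolute convergence of \eqref{eq:aim}. Second, the reduction from the original average-cost problem with partial observation to the fully-observed LQ tracking problem requires care: one must verify that the certainty-equivalence-type step is exact here (it is, because the controller is linear in $p$ and the state noise and estimation error are additive and control-independent in the relevant terms), so that the $\bar p_k$ defined via $\mathbf{E}(y_s\mid I_k, p_s = \bar p_s) = t$ is the correct target and the filtering errors drop into the $p$-independent constant $c_k$. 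Once those two points are pinned down, the explicit forms \eqref{eq:p-opt}, \eqref{eq:aim} and the coefficient-sum identity drop out of the recursions.
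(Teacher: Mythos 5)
Your proposal is correct and follows essentially the same route as the paper: a finite-horizon dynamic program with a quadratic ansatz for the cost-to-go, the identical Riccati recursion $Q_k=\lambda I-\lambda^2(\lambda I+Q_{k+1}+B_{k+1}^\intercal B_{k+1})^{-1}$, a first-order condition giving the convex-combination form \eqref{eq:p-opt}, and unrolling the linear term to obtain \eqref{eq:aim}; your bookkeeping via the market-clearing target $\bar p_s$ and a single linear coefficient $q_k$ is equivalent to the paper's explicit parametrization $\tau_k - R_k^\intercal a_{k+1}^d$, with the conditional expectations $\mathbf{E}(\aim_{k+1}\mid I_k)$ playing the role of your certainty-equivalence step. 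Your telescoping/contraction argument for $\sum_{s>k}(I-Z_s)Z_{s-1}\cdots Z_{k+1}=I$ and your attention to convergence of the finite-horizon limits are welcome additions that the paper's appendix leaves implicit.
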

\begin{proof}
The proof is provided in Appendix~\ref{app:proofs}.
\end{proof}

We can understand \Cref{thm-lambda-positive} through an interpretation analogous to that developed
by \citet{garleanu2013dynamic}, who consider different LQG control problem in a quantitative
trading setting.  First, observe that according to \eqref{eq:p-opt}, the optimal price in block
$k+1$ can be expressed as the (matrix-weighted) convex combination of the current price $p_k$ and
a future ``aim'' price $\aim_k$. This captures a trade-off between the two objectives of not
changing prices much (by staying close to the current price) and ensuring that future observed
demand aligns with the target $t$ (by targeting the aim price).

The aim price $\aim_k$, in particular, is a (matrix-weighted) average of future market clearing
prices $\{\Bar{p}_{s}\}$. Each $\Bar{p}_{s}$ is the price for block $s$ that so
that the expected demand in block $s$ matches the target $t$, conditional on information known up
to block $k$. The matrix weights in \eqref{eq:aim} specify exactly how much to weight to put on
different future periods. These weights guide the controller in how much to emphasize targeting
demand in different future blocks, when making the pricing decision for the next block. This
captures the idea that one should set immediate prices in a way that not only works well for the
next block, but also leaves the controller in a good position for subsequent blocks downstream.

\medskip
\noindent\textbf{\sffamily Model predictive control.}
Of course, the original problem \eqref{problem} allows for stochastic and time-varying
price-sensitivity, and this is an important feature in the empirical analysis of
\Cref{sec:empirical}. Therefore, we will adapt the policy of \Cref{thm-lambda-positive} to this
more general setting using the idea of model predictive control (MPC):
\begin{enumerate}
\item After each block $k$, we estimate the current price-sensitivity according to
  $\hat B_k = \mathbf{E}(B_k|I_k)$, using all available information available at that time, under
  the original dynamics \eqref{B_dynamic}. This can be accomplished using a standard Kalman
  filter, as described in the belief updates of \Cref{sec:model}.

\item Given the estimate $\hat B_k$, we solve for the optimal policy under
  \Cref{thm-lambda-positive}, that is, assuming that $B_k=\hat B_k$ and that the future evolution
  of $B_k$ is deterministic as per \eqref{eq:deterministic_B}.
\end{enumerate}

While the MPC policy is no longer optimal for the problem \eqref{problem} under stochastic
price-sensitivity, we shall see in the empirical results of \Cref{sec:empirical} that nevertheless
this procedure is a powerful heuristic.

%%% Local Variables:
%%% mode: latex
%%% TeX-master: "lqr-tx-fees"
%%% End:

\section{Resource Design and Separable Eigenresources}

In this section, we present a pricing approach which holds significant economic
implications. Instead of pricing individual resources as explicitly specified, we focus on pricing
the linear combinations of these resources, which we refer to as eigenresources. Our main result
also has implications for protocol designers thinking about how to structure supply-side resources
and create markets for pricing them. The idea here is that, in the policies of
\Cref{sec-optimal-policy}, in general resources must be \emph{jointly} priced. This is because
there could be cross-price elasticities, since some resources can be complements or
substitutes. In this section, we discuss a setting where, through the definition of appropriate
resources, the problem nevertheless separates into different control problems for each resource.

We will derive our result under the assumption that the price-sensitivity parameter $B_{k}$ can be
decomposed into principal components consistently over time:
\begin{assumption}
\label{assumtion_eigenresources}
There exist fixed, orthogonal matrices $U,V\in\mathbb{R}^{n \times n}$ such that, for all $k$,
$$B_{k} = U \diag(\delta_{k}) V^\intercal.$$
Here, $\delta_{k}\in\mathbb{R}^{n}$ is the sensitivity vector for eigenresources governed by a mean-reverting process
\begin{equation}
  \delta_{k+1} = (I-A^{\delta}) \mu^{\delta} + A^{\delta} \delta_{k} + \epsilon_{k}^{\delta}.
  \label{delta_dynamic}
\end{equation}
where $A^{\delta} \in\mathbb{R}^{n \times n}$, $\mu^{\delta}\in\mathbb{R}^{n}$, and $\epsilon_{k}^{\delta} \sim \mathcal{N}(0, W^{\delta})$ represents the i.i.d.\ Gaussian noise with covariance matrix $W^{\delta}\in\mathbb{R}^{n \times n}$.
\end{assumption}

Given the matrices $(U,V)$, we can change coordinates as follows:

\medskip
\noindent\textbf{\sffamily Eigendemand.} The eigendemand of the eigenresources $\Tilde{y}_k \in \mathbb{R}^n$ that are included into a valid block is given by
$$\Tilde{y}_{k} = U^{-1} y_k.$$
Here, the columns of $U^{-1}$ define linear combinations of the original resources. These new
eigenresources are what will be prices.

\medskip
\noindent\textbf{\sffamily Eigenprice.} The eigenprice of eigenresources $\Tilde{p}_k \in \mathbb{R}^n$ is given by
$$\Tilde{p}_{k} = V^\intercal p_k.$$

\medskip
\noindent\textbf{\sffamily Eigentarget.} The eigentarget of eigenresources $\Tilde{t} \in \mathbb{R}^n$ is given by
$$\Tilde{t} = U^{-1} t.$$
The orthogonal property of matrix $U$ ensures the cost for the demand deviation from target $t$ equals to the cost for the eigendemand deviation from eigentarget $\Tilde{t}$, i.e.
$\| y_{k} - t \|_{2}^{2} = \| \Tilde{y}_{k} - \Tilde{t} \|_{2}^{2},$
which means it is equivalent for us to minimize the eigendemand deviation from eigentarget $\Tilde{t}$.

\medskip
\noindent\textbf{\sffamily Eigenprice sensitivity.} We can get the eigendemand $\Tilde{y}_k$ satisfies
$$\Tilde{y}_k = \Tilde{d}_k + \diag(\delta_k) \Tilde{p}_k +
\Tilde{\epsilon}_k^y,$$
where $\Tilde{d}_k \defeq U^{-1}d_k$ and $\Tilde{\epsilon}_k^y \defeq U^{-1}\epsilon_k^y$ by
$y_{k} = d_{k} + B_{k} p_{k} + \epsilon_k^y.$
We can see that $\delta_k$ is the eigenprice-sensitivity vector indicating the price-sensitivity of each eigenresource. Equation \eqref{delta_dynamic} shows that the long-term mean base eigenprice-sensitivity is given by the vector $\mu^{\delta}$, but it can also fluctuate around the base eigenprice-sensitivity vector, and the parameters $A^{\delta}$ quantifies how quickly the eigenprice-sensitivity vector reverts to the base eigenprice-sensitivity vector.

\medskip
\noindent\textbf{\sffamily Optimal policy.} The following theorem characterizes the optimal policy under Assumption \ref{assumption_lambda_0} and Assumption \ref{assumtion_eigenresources}.

\begin{theorem}\label{thm:eigen}
  Consider the objective of minimizing the expected long-term average cost,
  \[
    \begin{array}{ll}
      \minimize_{p} & \displaystyle
                      J \defeq \lim_{K \to \infty} \mathbf{E} \left\{
                       \frac{1}{K} \sum_{k=1}^{K} \| y_{k} - t \|_{2}^{2} \right\} \\
      \subjectto & \text{\rm $p_k$ is $I_{k-1}$-adapted}.
    \end{array}
  \]
  Then, the optimal policy updates prices according to the rule
  $$ p_{k+1}^* \defeq V \Tilde{p}_{k+1}^*,$$
  where the $i$th element of eigenprice $\Tilde{p}_{k+1}$ satisfies
  $$\Tilde{p}_{k+1,i}^{*} = \frac{\mathbf{E} \left( \delta_{k+1,i} (\Tilde{t}_i-\Tilde{d}_{k+1,i}) |I_k \right)}{\mathbf{E} \left( \delta_{k+1,i}^2 | I_k \right)},$$
  $\delta_{k+1,i},\Tilde{t}_i,\Tilde{d}_{k+1,i}$ are the $i$th element of $\delta_{k+1},\Tilde{t},\Tilde{d}_{k+1}$ respectively.
\end{theorem}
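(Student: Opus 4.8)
The plan is to obtain the rule directly from \Cref{thm-lambda-0}, and then specialize it using the spectral structure of \Cref{assumtion_eigenresources}, rather than re-solving a control problem from scratch. First I would check that \Cref{thm-lambda-0} is applicable: the objective in \Cref{thm:eigen} is exactly Problem~\eqref{problem} with $\lambda=0$, and \Cref{assumtion_eigenresources} is consistent with the model of \Cref{sec:model} --- writing $\mathbf{vec}(B_k) = (V\otimes U)\,\mathbf{vec}(\diag(\delta_k))$ and inserting the mean-reverting dynamics \eqref{delta_dynamic} shows that $\mathbf{vec}(B_k)$ still obeys mean-reverting linear-Gaussian dynamics of the form \eqref{B_dynamic}, the only difference being that the noise covariance $W^B$ becomes rank-deficient (supported on an $n$-dimensional subspace), which is immaterial for the argument. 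Hence \Cref{thm-lambda-0} yields
\[
  p_{k+1}^* = \bigl[\,\mathbf{E}\bigl(B_{k+1}^\intercal B_{k+1}\mid I_k\bigr)\,\bigr]^{-1}\,\mathbf{E}\bigl(B_{k+1}^\intercal(t-d_{k+1})\mid I_k\bigr).
\]

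Next I would substitute $B_{k+1}=U\diag(\delta_{k+1})V^\intercal$ and exploit the orthogonality relations $U^\intercal U=I$ and $V^\intercal V=I$. The quadratic term collapses: $B_{k+1}^\intercal B_{k+1}=V\diag(\delta_{k+1})\,(U^\intercal U)\,\diag(\delta_{k+1})\,V^\intercal=V\diag(\delta_{k+1})^2V^\intercal$, so $\mathbf{E}(B_{k+1}^\intercal B_{k+1}\mid I_k)=V D_k V^\intercal$ where $D_k\defeq\diag\bigl(\mathbf{E}(\delta_{k+1,1}^2\mid I_k),\dots,\mathbf{E}(\delta_{k+1,n}^2\mid I_k)\bigr)$; this is positive definite --- hence invertible, with inverse $V D_k^{-1}V^\intercal$ --- provided each conditional second moment is strictly positive, which is a mild nondegeneracy condition (otherwise the $i$th eigenresource has no price response in block $k+1$ and its price is indeterminate). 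The linear term simplifies to $B_{k+1}^\intercal(t-d_{k+1})=V\diag(\delta_{k+1})\,U^\intercal(t-d_{k+1})=V\diag(\delta_{k+1})(\Tilde{t}-\Tilde{d}_{k+1})$, using $U^\intercal=U^{-1}$ together with $\Tilde{t}=U^{-1}t$ and $\Tilde{d}_{k+1}=U^{-1}d_{k+1}$. Combining the two and cancelling $V^\intercal V=I$ gives
\[
  p_{k+1}^* = V D_k^{-1}\,\mathbf{E}\bigl(\diag(\delta_{k+1})(\Tilde{t}-\Tilde{d}_{k+1})\mid I_k\bigr) = V\,\Tilde{p}_{k+1}^*,
\]
and reading off the $i$th coordinate --- which is legitimate since $D_k$ is diagonal --- yields $\Tilde{p}_{k+1,i}^* = \mathbf{E}\bigl(\delta_{k+1,i}(\Tilde{t}_i-\Tilde{d}_{k+1,i})\mid I_k\bigr)\big/\mathbf{E}\bigl(\delta_{k+1,i}^2\mid I_k\bigr)$, which is the claimed rule.

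I do not anticipate a serious obstacle, but three points deserve care. First, the consistency check in the opening step --- that the $B_k$-dynamics under \Cref{assumtion_eigenresources} are a (degenerate) instance of the \Cref{sec:model} framework --- should be made explicit so that \Cref{thm-lambda-0} applies cleanly; an equivalent and arguably cleaner presentation changes coordinates up front, using orthogonality of $U$ to get $\|y_k-t\|_2^2=\|\Tilde{y}_k-\Tilde{t}\|_2^2$ and invertibility of $U,V$ to get that the $\sigma$-algebra generated by $(\Tilde{y}_0,\dots,\Tilde{y}_k,\Tilde{p}_0,\dots,\Tilde{p}_k)$ equals $I_k$, so the reformulated problem has the same objective and adaptedness constraint but a \emph{diagonal} price-sensitivity matrix $\diag(\delta_k)$; the quadratic minimization underlying \Cref{thm-lambda-0} then decouples coordinate by coordinate and gives the same answer. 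Second, orthogonality of \emph{both} $U$ and $V$ is used, and in distinct places: $U^\intercal U=I$ to collapse $B_{k+1}^\intercal B_{k+1}$ and $U^{-1}=U^\intercal$ to recognise $\Tilde{t},\Tilde{d}_{k+1}$, while $V$ orthogonal is what makes $(V^\intercal)^{-1}=V$ so that eigenprices map back to prices. Third, it is worth emphasizing that it is only the optimal \emph{policy map} that separates across the $n$ coordinates; the Kalman filter maintaining the joint posterior of $(\Tilde{d}_k,\delta_k)$ need not decouple, which is entirely consistent with the statement of the theorem.
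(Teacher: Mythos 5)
Your proposal is correct and follows essentially the same route as the paper's proof: apply \Cref{thm-lambda-0}, substitute $B_{k+1}=U\diag(\delta_{k+1})V^\intercal$, use orthogonality of $U$ and $V$ to collapse $\mathbf{E}(B_{k+1}^\intercal B_{k+1}\mid I_k)$ and $\mathbf{E}(B_{k+1}^\intercal(t-d_{k+1})\mid I_k)$, and read off $\Tilde{p}_{k+1}^*=V^\intercal p_{k+1}^*$ coordinate-wise. Your added remarks (that the $\delta$-dynamics are a degenerate instance of \eqref{B_dynamic}, the nondegeneracy needed for invertibility, and that only the policy map --- not the filter --- separates) are careful refinements the paper leaves implicit, but the argument is the same.
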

\begin{proof}
  The proof is provided in Appendix~\ref{app:proofs}.
\end{proof}

The optimal price in block $k+1$ is the product of orthogonal matrix
$V$ and the eigenprice
$\Tilde{p}_{k+1}$. The explicit expression of each eigenprice $\Tilde{p}_{k+1,i}^*$ shows that
optimal prices for eigenresources depend only on the its own expected demand, price-sensitivity,
and target. This implies that when one can leverage this decomposition, one can devise pricing
strategies that are simple yet optimal. In particular, they do not require the joint determination
of all prices that can be more computationally intensive.

Note that \Cref{thm:eigen} is analogous to \Cref{thm-lambda-0}, in that they both handle the
$\lambda=0$ case (\Cref{assumption_lambda_0}). There is a similar analogous result for the
deterministic price-sensitivity case of \Cref{thm-lambda-positive}: in that case, the optimal
policy will similarly separate across eigenresources. We omit an explicit statement and proof of
this result since it is relatively obvious given the developments thus far.

Beyond pricing, these results
also holds implications for resource market design. In particular,
if the protocol designer can define virtual resources that have minimal demand-side interaction
between them, then simple heuristics that implement independent price updates between resources
will have a smaller efficiency loss versus the optimal pricing policy. We further expand on this
insight in \Cref{sec-implications} when we discuss EIP-4844.

% Additionally, it is worth noting that the ability to consider each dimension in isolation and then amalgamate the respective optimal policies remains valid and efficacious under the purview of Assumption \ref{assumption_deterministic_B} and Assumption \ref{assumtion_eigenresources}. This reinforces the versatility and robustness of our proposed pricing strategy under diverse assumptions.

%%% Local Variables:
%%% mode: latex
%%% TeX-master: "lqr-tx-fees"
%%% End:

\section{Implications}
\label{sec-implications}

In this section we study particular instances of our framework (\textit{e.g.}, uni-dimensional or
bi-dimensional resources) and develop practical implications by comparing optimal policies to
heuristics that are used in practice, such as the EIP-1559 and the EIP-4844 price update
rules. For a more direct comparison with these rules, in this section we consider the optimal
policy characterized in \Cref{thm-lambda-0} when $\lambda=0$, and we focus on the special case in
which the price sensitivity $B_k$ is deterministic and constant.

\subsection{Uni-dimensional Resource and Ethereum Gas}

Suppose there is only one resource, for example a virtual resource such as \textit{gas} in
Ethereum. Define $\beta\defeq\partial y_k/\partial p_k<0$ to be the (deterministic, constant) price sensitivity of demand in this one dimensional model, substituting $p_{k+1}=p_k+u_k$ in the optimal pricing policy of Theorem 1 yields the following equation for the optimal price update
\begin{equation}\label{unidim-policy}
    u_k = - \beta^{-1} (a_{k+1}+\beta p_{k} - t),
\end{equation}
where the predictive estimate of demand can be expressed as
$$
a_{k+1} = (1-\alpha) \mu^d + \alpha \left(\frac{\sigma_{\epsilon^y}^2}{\alpha^2\hat\sigma^2_{k}+\sigma_{\epsilon^d}^2+\sigma_{\epsilon^y}^2}a_{k} + \frac{\alpha^2\hat\sigma^2_{k}+\sigma_{\epsilon^d}^2}{\alpha^2\hat\sigma^2_{k}+\sigma_{\epsilon^d}^2+\sigma_{\epsilon^y}^2}(y_k-\beta p_k) \right).
$$
Here, $\alpha$ is the uni-dimensional entry of the matrix $A^d$ and the variances
$\hat\sigma^2_{k},\sigma_{\epsilon^d}^2,\sigma_{\epsilon^y}^2$ are the uni-dimensional entries of the matrices
$\hat{\Sigma}_k, W^d, W^y$, respectively. The \textit{optimal price update} has two main properties:
\begin{itemize}
    \item[\textit{(i)}] \textit{Maintains the optimal balance between responsiveness to change and robustness to noise.} It does this by using the signal-to-noise ratio $\sigma_{\epsilon^y}^2/(\alpha^2\hat\sigma^2_{k}+\sigma_{\epsilon^d}^2)$ to optimally weigh in new information -- the higher the idiosyncratic variance $\sigma_{\epsilon^y}^2$ of block size observation, \textit{vis-\`a-vis} a more persistent state change, the less responsive the optimal controller will be to local deviation from the current estimate.
    \item[\textit{(ii)}] \textit{Optimally computes the next period price to match predicted and target demand}. First computing the predicted demand at the current price $a_{k+1}+\beta p_{k}$ and then adjusting with the price sensitivity. The \textit{optimal price update is increasing in predicted demand and decreasing in price sensitivity} --- \textit{i.e.}, the more sensitive users are to prices the lower the update required to control demand.
\end{itemize}

\medskip
\noindent
\textbf{\sffamily Optimal Policy vs.\ EIP-1559. }%
We compare the optimal price update policy to the EIP-1559 update rule and distill insights that can guide future design and improvements. Consider the generic EIP-1559 style linear price update rule
\begin{equation}\label{1559-rule}
    p_{k+1} = p_{k} \left(1+ \gamma_k \frac{y_{k} - t}{t} \right),
\end{equation}
where EIP-1559 has $\gamma_k^{\text{eip1559}}\defeq 1/8$. We can apply a few simple transformations to the optimal policy to obtain a comparable form. Note
that the price elasticity of block $k$ demand at the target point $t$ is
$\eta_k = \beta p_k / t$. Solving for $\beta$, substituting into the scaling factor of the
optimal update \eqref{unidim-policy}, and substituting the resulting $u_k^*$ into the full price update
equation $p_{k+1}=p_k+u_k^*$ yields
$$
p_{k+1} = p_{k} \left(1+\frac{1}{|\eta_k|} \frac{a_{k+1}+\beta p_{k} - t}{t} \right).
$$
Thus, the optimal policy is similar to the EIP-1559 rule \eqref{1559-rule}, with two key differences:
\begin{itemize}
    \item[$\bullet$] EIP-1559 uses the current block size $y_k$ as a n\"aive estimate for the next period demand, while the optimal policy uses the predictive estimate;
    \item[$\bullet$] EIP-1559 divides the adjustment by a fixed parameter, set to $8$, instead of the optimal parameter which is equal to the absolute value of the current demand elasticity $|\eta_k|$.
\end{itemize}
Another way of comparing is by noting that the optimal policy is actually equivalent to the EIP-1559 style rule \eqref{1559-rule} with step-size
$$
\gamma_k^{\text{opt}} = \frac{1}{|\eta_k|}\frac{a_{k+1}+\beta p_{k} - t}{y_{k} - t},
$$
which shows how, while EIP-1559 updates prices with a constant
 learning rate of $\gamma_k^{\text{eip1559}}=1/8$, this policy optimally adapts the learning rate to the predicted demand.

 \subsection{Bi-dimensional Resource and Ethereum Datagas}

 Consider now the case of two resources. This is akin to how the Ethereum fee market once EIP-4844
 introduces an additional market for \textit{datagas}. We expand on this shortly, for now consider
 the generic bi-dimensional $n=2$ resource case, the optimal update at step $k$ has the form
$$
\begin{bmatrix}
    u_{k,1}\\
    u_{k,2}
\end{bmatrix}=
\frac{1}{\beta_{12} \beta_{21} - \beta_{11} \beta_{22}}
\begin{bmatrix}
    -\beta_{22} & \beta_{12}\\
    \beta_{21} & - \beta_{11}
\end{bmatrix}
\begin{bmatrix}
    \Delta_{k,1}\\
    \Delta_{k,2}
\end{bmatrix},
$$
where $\Delta_{k,i}=(a_{k+1}+B p_{k} - t)_i$ is the predicted deviation of resource $i$ demand from its target and $\beta_{ij}$ the element in row $i$ and column $j$ of $B$. The update to price 1 can thus be expressed as
$$
u_{k,1}=\frac{\Delta_{k,1}-\beta_{12}\beta_{22}^{-1}\Delta_{k,2}}{\beta_{11}-\beta_{21}\beta_{22}^{-1}\beta_{12}},
$$
which computes the optimal joint resource price update taking into account both resource demand balances and cross-sensitivities. We illustrate how this works by considering three scenarios of increasing sophistication:

\begin{itemize}
    \item[\textit{(i)}] if resource 1 is independent of price 2 ($\beta_{12}=0$) then the optimal update is
      similar to the uni-dimensional case, $u_{k,1} = \Delta_{k,1}/\beta_{11}$;
    \item[\textit{(ii)}] if resource 1 is dependent of price 2 ($\beta_{12}\neq0$) but resource 2 is
      independent of price 1 ($\beta_{21}=0$) then the optimal update is $u_{k,1} = (\Delta_{k,1}-\beta_{12}\beta_{22}^{-1}\Delta_{k,2})/\beta_{11}$, where the second term in the numerator accounts for the joint update of price 2. Suppose $\Delta_{k,2}>0$, then the optimal update of price 1 will be adjusted \textit{downwards if the resources are complement} ($\beta_{12}<0$) and \textit{upwards if the resources are substitute} ($\beta_{12}>0$);
    \item[\textit{(iii)}] if both cross sensitivities are positive then the optimal update also accounts for the \textit{indirect effect} of changing price 1 on resource 1 via its impact on the demand and optimal update of resource 2 (i.e., the pathway $\beta_{21} \rightarrow 1/\beta_{22} \rightarrow \beta_{12}$ in the demand system represented in \Cref{fig:example}). The denominator in the full update formula above is the ``net sensitivity'' of resource 1 to price 1.
\end{itemize}

\begin{figure}[h]
    \centering
    \includegraphics[width=0.35\textwidth]{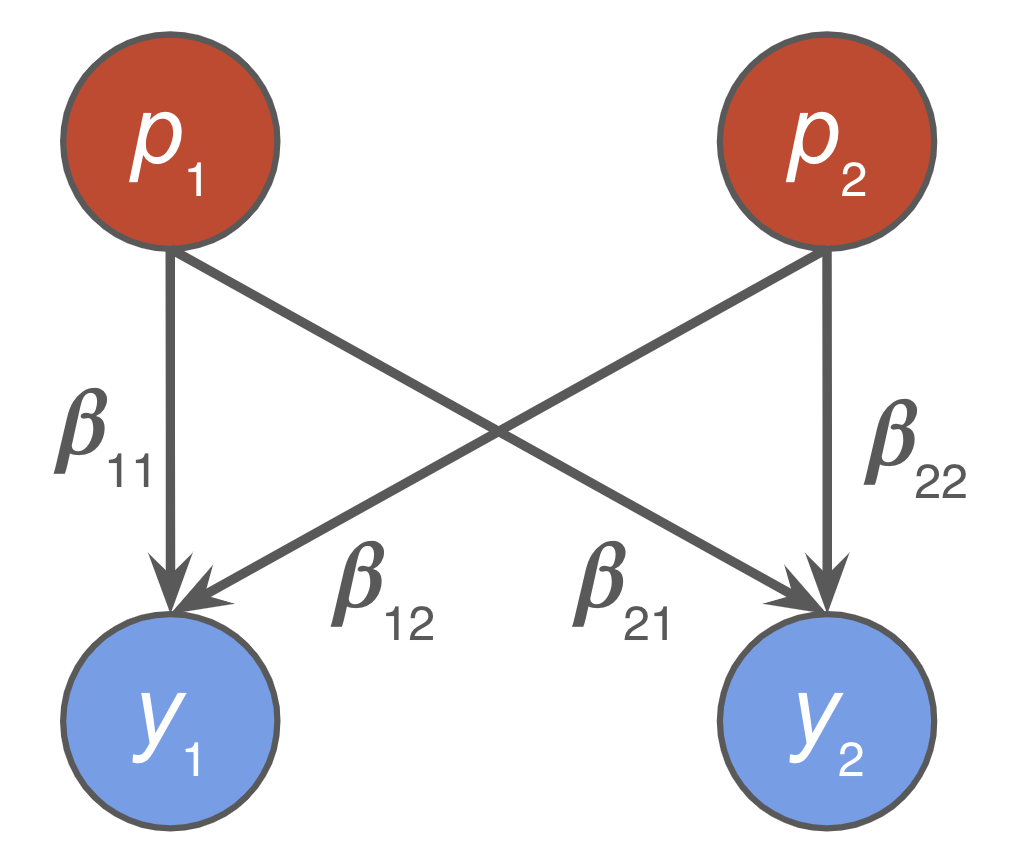}
    \caption{Example of prices-resources relationship in the demand system with cross-sensitivity matrix $B$ (bi-dimensional case).}
    \label{fig:example}
\end{figure}

\medskip
\noindent
\textbf{\sffamily Optimal Policy vs.\ EIP-4844. }%
We now compare the optimal policy to the rule that will be used in Ethereum's fee market after
EIP-4844 is implemented. As mentioned before, this EIP introduces a new resource called
\textit{datagas} on top of the standard \textit{gas}. This new resource is consumed by a
particular type of transaction called \textit{BlobTransaction} which is likely going to be used by
most Layer 2 systems to commit data to Layer 1. The EIP also introduces a new fee market for the
new resource with a price update rule that mimics EIP-1559.\footnote{In reality an exponential form
  is used for \textit{datagas} instead of the linear form used for \textit{gas}. Since the two
  forms are approximately equal we use the linear form for both resources to make the
  presentation clearer.} Thus at every block the prices of both resources are updated as follows
\[
p_{k+1, \text{dat}} = p_{k,\text{dat}} \left(1+ \frac{1}{8} \frac{y_{k,\text{dat}} -
    t_{\text{dat}}}{t_{\text{dat}}} \right), \quad
p_{k+1, \text{gas}} = p_{k,\text{gas}} \left(1+ \frac{1}{8} \frac{y_{k,\text{gas}} -
    t_{\text{gas}}}{t_{\text{gas}}} \right).
\]
One important thing to note is that the updates calculated via these rules are independent from one another, while they are coupled in the optimal update policy. For example, consider the first update rule for datagas above and compare it to the optimal update policy,
\[
  \begin{split}
    p_{k+1, \text{dat}} &=
    p_{k,\text{dat}} +\frac{a_{k+1,\text{dat}}+\beta_{\text{dat},\text{dat}}
                          p_{k,\text{dat}}+\beta_{\text{dat},\text{gas}}
                          p_{k,\text{gas}}-t_{\text{dat}}}{\beta_{\text{dat},\text{dat}}-\beta_{\text{gas},\text{dat}}\beta_{\text{gas},\text{gas}}^{-1}\beta_{\text{dat},\text{gas}}}\\
    &\quad - \beta_{\text{dat},\text{gas}}\beta_{\text{gas},\text{gas}}^{-1}
      \frac{a_{k+1,\text{gas}}+\beta_{\text{gas},\text{dat}}
      p_{k,\text{dat}}+\beta_{\text{gas},\text{gas}}
      p_{k,\text{gas}}-t_{\text{gas}}}{\beta_{\text{dat},\text{dat}}-\beta_{\text{gas},\text{dat}}\beta_{\text{gas},\text{gas}}^{-1}\beta_{\text{dat},\text{gas}}},
  \end{split}
\]
which properly accounts for the effect of an update in gas price on the future demand of datagas and, moreover, it adjusts the datagas price update taking into account the joint update of the gas price. This difference is important because \textit{BlobTransactions} consume both resources and thus the demands are not independent. Failing to take this into account will likely result in cycles of over/under correction of prices. Inefficiencies of this type will be even more prevalent when moving beyond two resources, so we believe that future designs for multi-dimensional resource pricing should have some of the good properties of the optimal pricing policy highlighted here.

%%% Local Variables:
%%% mode: latex
%%% TeX-master: "lqr-tx-fees"
%%% End:

\section{Empirical Analysis}\label{sec:empirical}

We now demonstrate how, besides generating theoretical insights and benchmarks, our framework can easily be used in applied settings and calibrated or estimated using onchain data. We focus on a uni-dimensional application, for this we can use historical onchain data for gas
used and the associated base fees that are computed via the EIP-1559 rule that is live on Ethereum
mainnet. In the future, once the data on other resources will be available, for example the data market of EIP-4844 we can extend the analysis to the multi-dimensional case.

\subsection{Empirical Setup}

\medskip
\noindent\textbf{\sffamily Data.} We use historical market data for 99,547 blocks that were added to Ethereum in the span of two weeks (starting on February 24, 2023). The first block is 16,694,514 and the last block is 16,794,061, we rescale the first block number in our dataset to 0 for clarity. We also note that this is a generic 14-day period that does not represent any particular market conditions. For every block we observe the prevailing EIP-1559 base fee and we compute the total gas consumed aggregating over all the transactions included in the block.

\medskip
\noindent\textbf{\sffamily Parameter Estimation.} We estimate the uni-dimensional version of our model using the Expectation-maximization (EM) method on the observed sample. In particular, given the sample observations and the full unobserved state, respectively
$$\mathcal{I}_T\triangleq\left\{(y_{0},p_{0}),(y_{1},p_{1}),\cdots,(y_{T},p_{T})\right\} \quad \text{and} \quad \mathcal{S}_T\triangleq\left\{(d_{0},\beta_{0}),(d_{1},\beta_{1}),\cdots,\allowbreak (d_{T},\beta_{T})\right\},$$
we want an estimator for our model $\theta\triangleq\left(
\mu^d, \mu^{\beta}, \alpha^d, \alpha^{\beta}, \sigma_{\epsilon^d}, \sigma_{\epsilon^{\beta}}, \rho_{\epsilon^d, \epsilon^{\beta}}, \sigma_{\epsilon^y}, d_0, \beta_0, \sigma_{d_0}, \sigma_{\beta_0}, \rho_{d_0, \beta_0}\right)$, where $\alpha^{d}$ and $\alpha^{\beta}$ are the one dimensional entries of the respective matrices $A^{d}$ and $A^{\beta}$, $\sigma$ denotes the standard deviation and $\rho$ the correlation coefficient of the respective noise terms, and the last five parameters are the prior means, standard deviations, and correlation of the state variables at time 0. The EM algorithm finds the MLE of the marginal likelihood by iteratively applying two steps:

\begin{itemize}
    \item[] \noindent\textbf{E step.} Compute conditional likelihood $p\left(\mathcal{S}_T \vert \mathcal{I}_{T};\theta_{t}\right)$.

    \item[] \noindent\textbf{M step.} Compute $\theta_{t+1}=\argmax_{\theta}\mathbb{E}\left[
\left.  \log p\left(\mathcal{S}_T, \mathcal{I}_{T};\theta\right)\right\vert
 p\left(\mathcal{S}_T \vert \mathcal{I}_{T};\theta_{t}\right)\right]$.
\end{itemize}

In particular we use the Kalman smooth to compute the likelihood in the E step and then compute optimal solution of the M step where we have a convex objective function (Appendix~\ref{app:empirical} reports a detailed description of the algorithm). The result is the following MLE of our model parameters:

\begin{table}[h]
\centering
\begin{tabular}{lr}
\toprule
\textbf{Parameter} & \textbf{Value} \\
\midrule
\(\mu^d\) & \(7.30 \times 10^7\) \\
\(\mu^{\beta}\) & \(-2.38 \times 10^{-3}\) \\
\(\alpha^d\) & \(9.97 \times 10^{-1}\) \\
\(\alpha^{\beta}\) & \(9.98 \times 10^{-1}\) \\
\(\sigma_{\epsilon^d}\) & \(1.04 \times 10^6\) \\
\(\sigma_{\epsilon^{\beta}}\) & \(5.35 \times 10^{-5}\) \\
\(\rho_{\epsilon^d, \epsilon^{\beta}}\)\phantom{000} & \(-5.48 \times 10^{-1}\) \\
\(\sigma_w\) & \(7.21 \times 10^{6}\) \\
\(d_0\) & \(4.47 \times 10^7\) \\
\(\beta_0\) & \(-9.62 \times 10^{-4}\) \\
\(\sigma_{d_0}\) & \(1.47 \times 10^6\) \\
\(\sigma_{\beta_0}\) & \(3.05 \times 10^{-4}\) \\
\(\rho_{d_0, \beta_0}\) & \(-2.57 \times 10^{-1}\) \\
\bottomrule
\end{tabular}\\[7pt]
\caption{MLE Estimates of Model Parameters via EM Algorithm with Kalman smoother.}
\label{tab:model_parameters}
\end{table}

\medskip
\noindent\textbf{\sffamily Simulation.} We now use our model to simulate the optimal policy and then compare to the EIP-1559 policy for our main results. The main outcome of our simulation is the state estimates that, together with the model parameters, fully characterize our optimal policy. We use the Kalman filter to update state estimates online and then compute the optimal policies as characterized in Section \ref{sec-optimal-policy}. The following figure reports the state estimates. We can see how the state estimates vary during the day, adapting to the patterns of observed demand. The mean for the demand sensitivity is $\bar{\beta}=-0.0025$, we can use the prevailing price to compute the implied demand elasticity $\eta_k = \beta_k p_k / t$ at the target for every block. The mean implied elasticity over the entire sample is $\bar{\eta}\approx-4$.

\begin{figure}[h]
\centering
\begin{tabular}{cc}
\begin{subfigure}{0.45\textwidth}
\includegraphics[width=\textwidth]{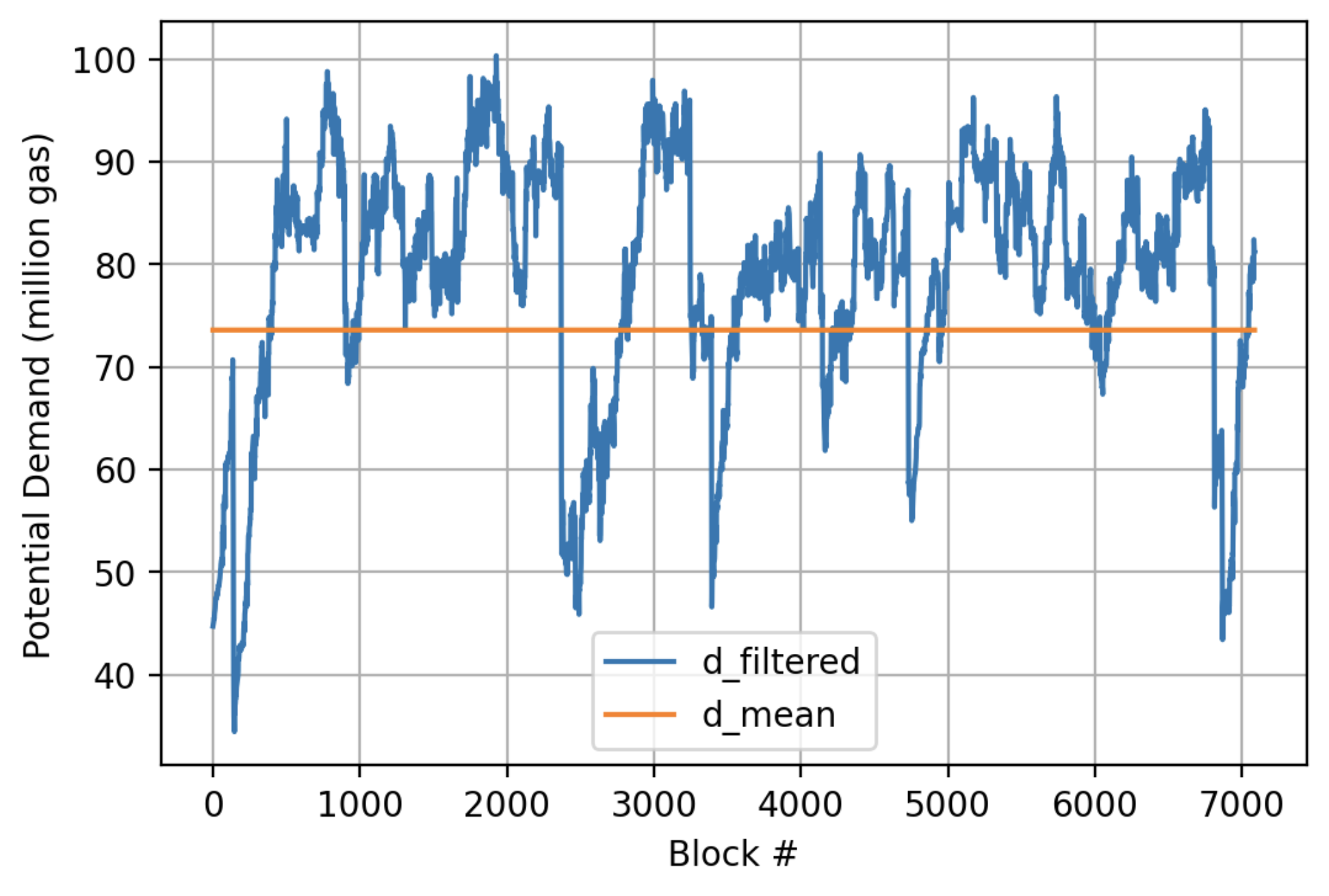}
\end{subfigure} &
\begin{subfigure}{0.475\textwidth}
\includegraphics[width=\textwidth]{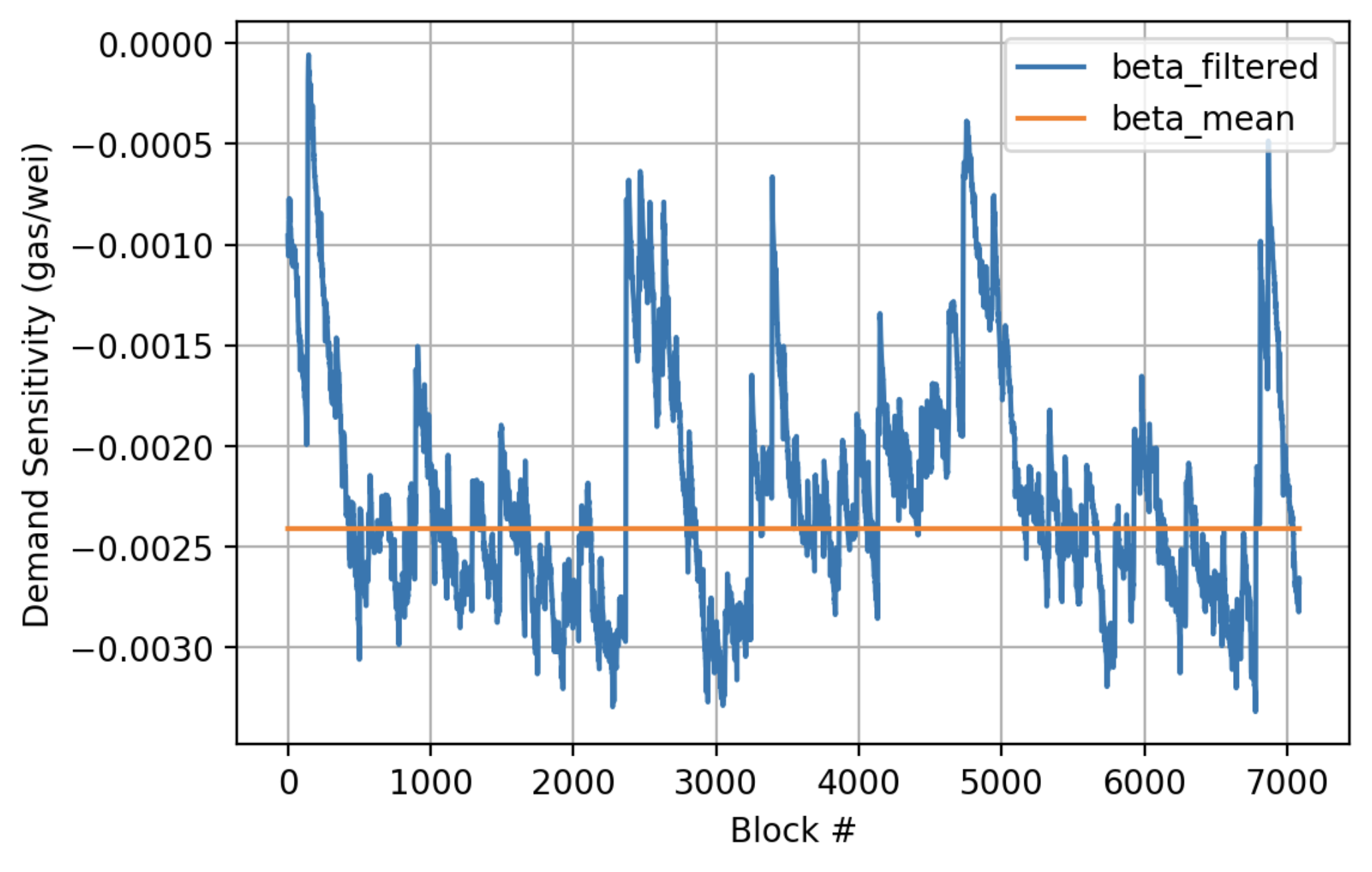}
\end{subfigure}
\end{tabular}
\caption{State estimates using the Kalman filter, potential demand $\hat{d}_k$ (left) and demand sensitivity $\hat{\beta}_k$ (right), for the first 7,080 blocks in our sample. The orange line reports the mean for the respective parameter estimate computed over \textit{all blocks in our sample}.\label{fig-state-estimates}}
\end{figure}

\subsection{Empirical Results}

We compute performance metrics for three policies and summarize them in the following table. The first policy is the optimal policy for the problem without price regularization ($\lambda=0$) that we charachterized in Theorem \ref{thm-lambda-0}, we call this policy LINDY($0$). The second policy is the policy with price regularization ($\lambda>0$), computed according to the MPC method that we derived from Theorem \ref{thm-lambda-positive}, we call this policy LINDY($\lambda$) with $\lambda=10^{-7}$ in this case. The last row reports the benchmark EIP-1559 policy. The main results are reported in Table \ref{tab-performance}, for each policy we compute the following metrics:

\begin{itemize}
    \item[$\bullet$] \textbf{Gas Used Bias: $\mathbf{Bias}(y,t)$.} Average deviation of gas used from target $t$ over all blocks in our sample. Unit: gas.
    \item[$\bullet$] \textbf{Gas Used Standard Deviation: $\mathbf{SD}(y)$.} Sample standard deviation of gas used. Unit: gas.
    \item[$\bullet$] \textbf{Root Mean Squared Deviation (of gas used from target): $\mathbf{RMSD}(y,t)$.} Square root of sum of squared difference between gas used and 15M gas target over all blocks. It can also be expressed as function of bias and standard deviation as follows
        $$\mathbf{RMSD}(y,t)=\sqrt{\frac{1}{T}\sum_{k=1}^T(y_k-t)^2}=\sqrt{\mathbf{Bias}(y,t)^2+\mathbf{SD}(y)^2}.$$
    Unit: gas.
    \item[$\bullet$] \textbf{Fraction of near-full blocks: $\mathbf{\phi_{0.95}}(y)$.} Fraction of blocks that are more than 95\% full, i.e. blocks that use more than 28.5M gas. Unit: percent.
    \item[$\bullet$] \textbf{Root Mean Squared Update: $\mathbf{RMSU}(p)$.} Square root of sum of
      squared updates, i.e. the difference between $p_{k+1}$ and $p_{k}$, for every block $k$ in
      our sample, i.e.,
$$\mathbf{RMSU}(p)=\sqrt{\frac{1}{T}\sum_{k=1}^T(p_{k}-p_{k-1})^2}.$$
      Unit: gweis.
\end{itemize}

\begin{table}[h]
\centering
\begin{tabular}{lP{3cm}P{3cm}P{3cm}P{3cm}P{3cm}}
\toprule
 & $\mathbf{Bias}(y,t)$ & $\mathbf{SD}(y)$ & $\mathbf{RMSD}(y,t)$ & $\mathbf{\phi_{0.95}}(y)$ & $\mathbf{RMSU}(p)$ \\
\addlinespace[0.5ex] % Add a bit more space between rows
\toprule
\textbf{LINDY($0$)} & 102 & 5,404,967 & 5,404,967 & 2.0\% & 1.9 \\
 & {\footnotesize (99.9\%)} & {\footnotesize (5.9\%)} & {\footnotesize (6.0\%)} & {\footnotesize (62\%)} & {\footnotesize (-35.7\%)} \\
\addlinespace[1ex] % Add a bit more space between rows
\textbf{LINDY($\lambda$)} & 3,932 & 5,404,474 & 5,404,476 & 2.0\% & 1.1 \\
 & {\footnotesize (97.2\%)} & {\footnotesize (5.9\%)} & {\footnotesize (6.0\%)} & {\footnotesize (62\%)} & {\footnotesize (21.4\%)} \\
\addlinespace[0.5ex] % Add a bit more space between rows
\midrule
\textbf{EIP-1559} & 140,495 & 5,745,604 & 5,747,321 & 5.3\% & 1.4 \\
\bottomrule
\end{tabular}\\[7pt]
\caption{Performance metrics for the LINDY policies and EIP-1559, the regularized policy has $\lambda=10^{-7}$. In parethesis below each metric the percentage improvement of the respective LINDY policy versus the EIP-1559 benchmark. The unit of the first three columns is Ethereum \textit{gas} and the unit of the last column is \textit{gweis}.}
\label{tab-performance}
\end{table}

The first column shows that the EIP-1559 policy overshoots the fifteen million gas target by 1\% on average in our sample while the LINDY policies are both extremely accurate, overshooting by fraction of a percent. Using column two, we can compute the \textit{RMSD} which shows that while the average deviation for the EIP-1559 policy is about 5.75M the LINDY policies deviate on average by about 5.4M, a \textit{6\% improvement}. 

Moreover, EIP-1559 results in \textit{2.65 times more near-full blocks compared to the LINDY policies}, suggesting that the performance improvement at times of high demand might be higher than the average. Finally, note that adding a bit of regularization yields a policy, LINDY($\lambda$), that is much smoother on average than EIP-1559 with an average update of only 1.1 gweis, a \textit{28\% improvement in the RMSU}. In what follows we analyze these properties with additional analyses on the performance of LINDY($\lambda$) versus the benchmark.

Figure \ref{fig-hist} reports two histograms to visualize the observed and simulated values for gas used (left) and fee updates (right) for all 99,547 blocks in our sample, comparing the EIP-1559 with the LINDY($\lambda$) policy. Note that we plot the fee update $u_k=p_{k+1}-p_k$ instead of the absolute fee. The plot on the left shows how our policy is able to achieve a distribution of gas used that is much less concentrated around the limit of 30M gas and closer to the gas target in general. At the same time, our policy shows lower variability in fees in aggregate, the distribution of updates is concentrated around the value of 0 in the right plot.

\begin{figure}[h]
\centering
\begin{tabular}{cc}
\begin{subfigure}{0.45\textwidth}
\includegraphics[width=\textwidth]{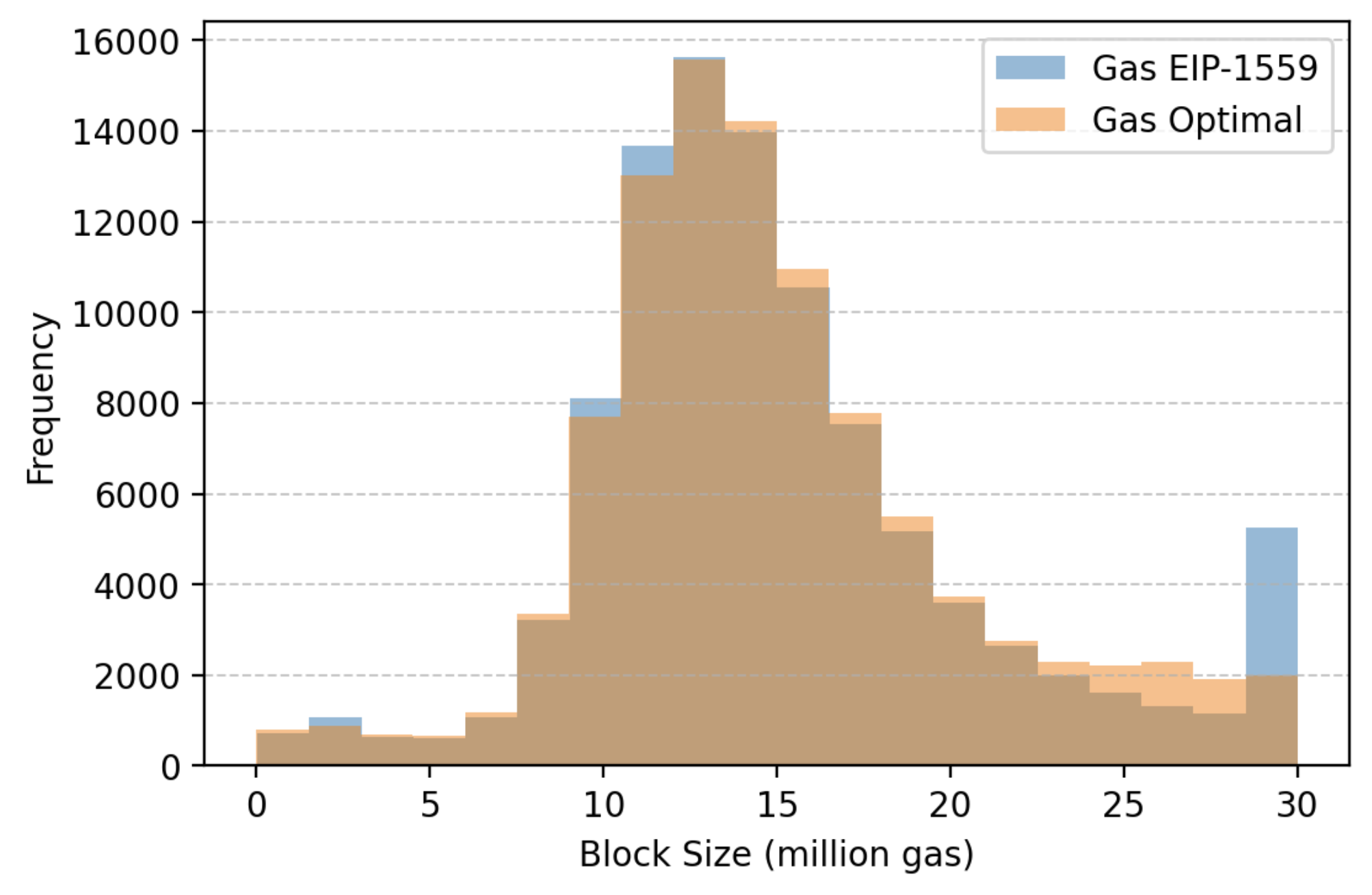}
\end{subfigure} &
\begin{subfigure}{0.45\textwidth}
\includegraphics[width=\textwidth]{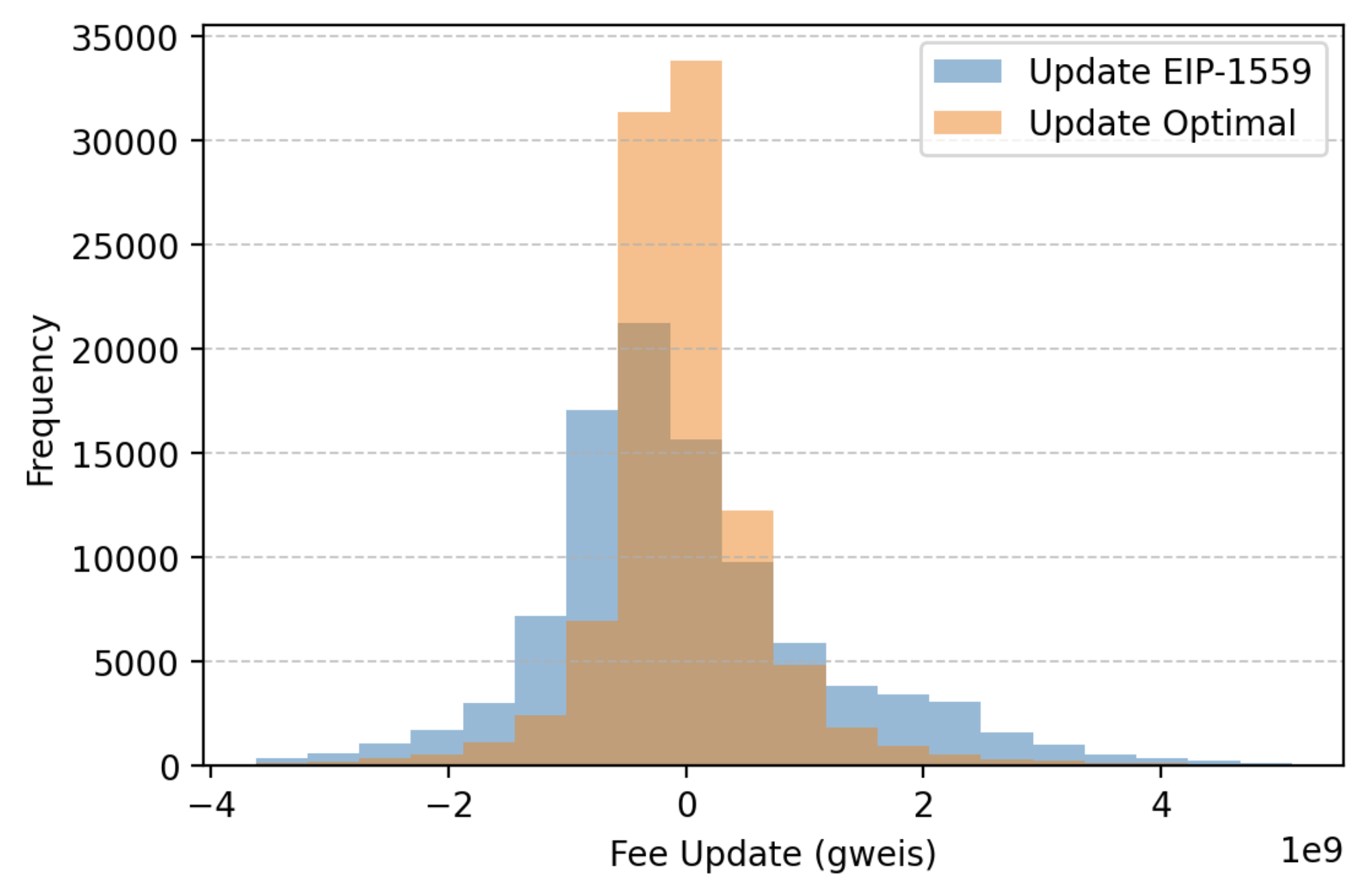}
\end{subfigure}
\end{tabular}
\caption{Histogram of gas used (left) and update size (right); comparing LINDY($\lambda$) policy to EIP-1559.\label{fig-hist}}
\end{figure}

Finally, we look at how our policy performs in different market regimes. In particular, we identify intervals of blocks that are in a \textit{demand spike} and blocks that are in a \textit{stable demand} regime. We use a heuristic that is solely based on the observed gas used. We compute a symmetric moving average of 25 blocks (roughly 5 minutes) for every block in our sample, then we classify as demand spikes the blocks that have a moving average of more than 20 million gas used (or 2/3 of the block limit) and stable demand the blocks that have a moving average between 13.5 and 16.5 million gas used (or within 1/10 of the block target).

Figure \ref{fig-spike-stable} shows an example of a demand spike and an example of stable and slightly increasing demand. We can appreciate how the LINDY($\lambda$) policy is more responsive to demand spikes, with fees surging more quickly than EIP-1559 and also going down more quickly after the demand spike (left). Also, computing the RMSD over demand spike blocks only shows that the optimal policy performs significantly better than EIP-1559, with an improvement of almost 12\%. At the same time, our policy is less variable and less reactive to noise during the many periods of stable demand (right).

\begin{figure}[h]
\centering
\begin{tabular}{cc}
\begin{subfigure}{0.45\textwidth}
\includegraphics[width=\textwidth]{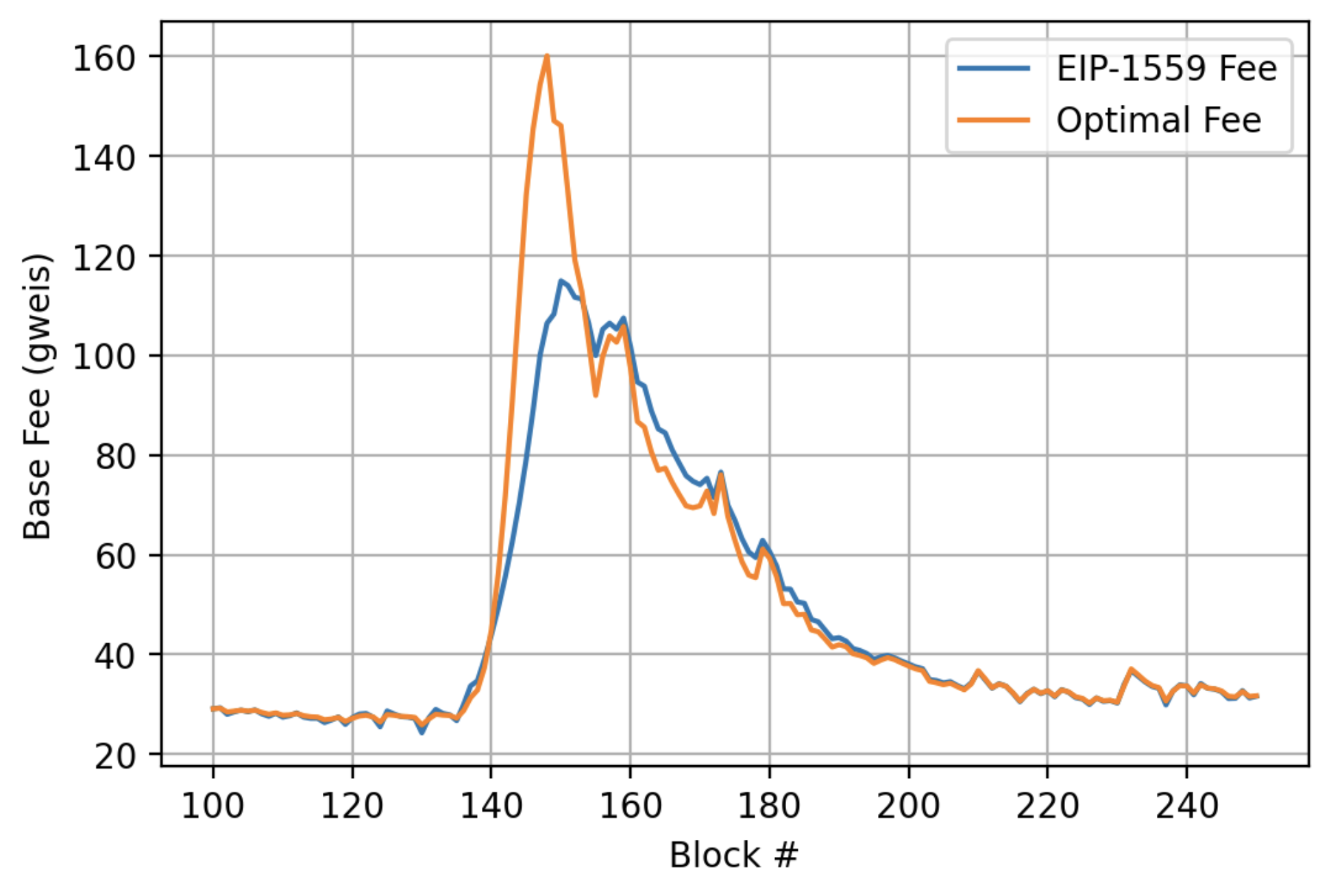}
\end{subfigure} &
\begin{subfigure}{0.442\textwidth}
\includegraphics[width=\textwidth]{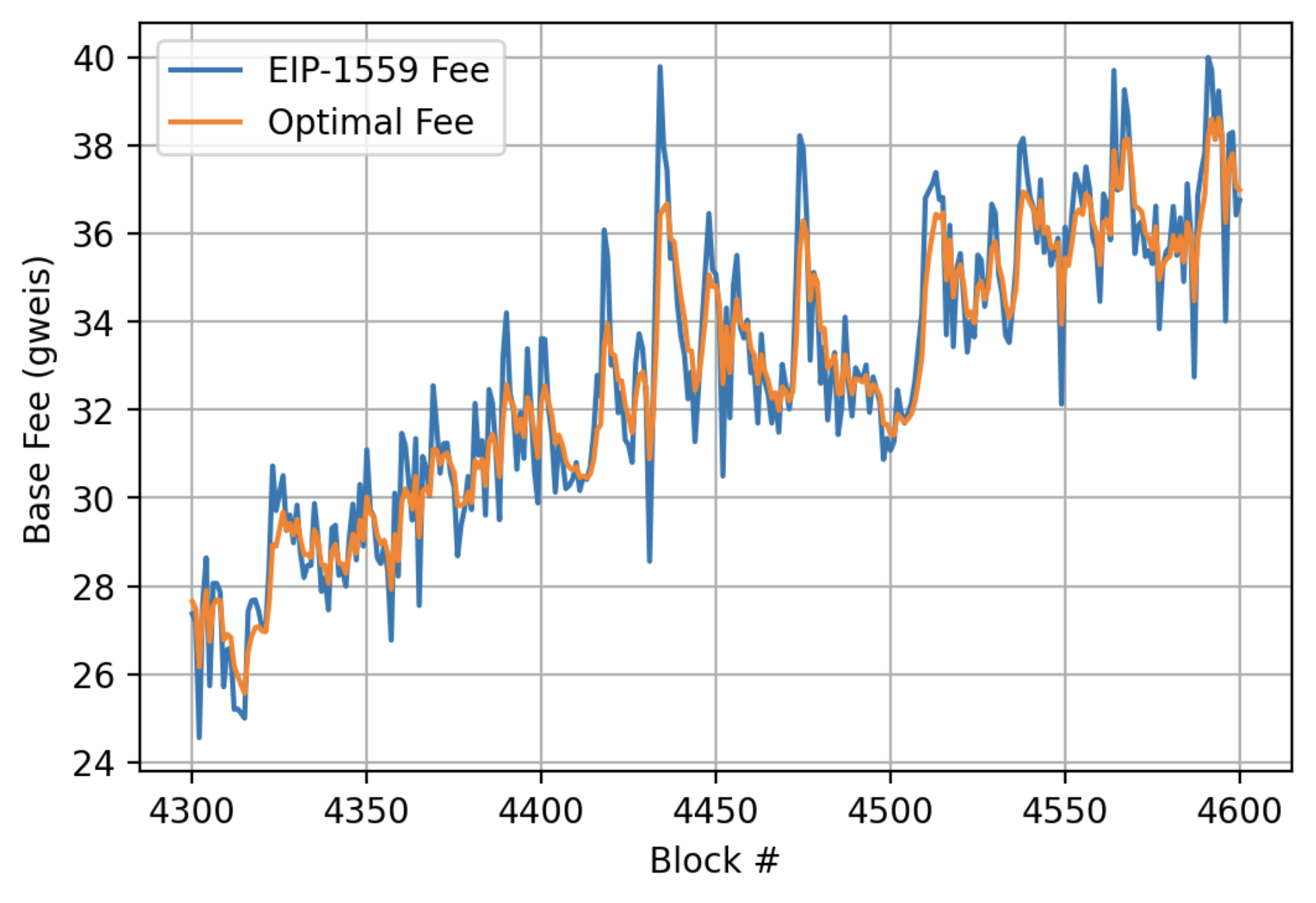}
\end{subfigure}
\end{tabular}
\caption{Base fee for EIP-1559 vs LINDY($\lambda$) policy during a demand spike (left, blocks 120 to 200) and a stable period with increasing demand (right, blocks 4300 to 4600).\label{fig-spike-stable}}
\end{figure}

%%% Local Variables:
%%% mode: latex
%%% TeX-master: "lqr-tx-fees"
%%% End:

\bibliographystyle{plainnat}
\bibliography{references.bib}

\newpage
\appendix

\section{Proofs}\label{app:proofs}

\noindent\textbf{\sffamily Proof of Theorem \ref{thm-lambda-0}.} We consider the objective function as a finite sum

\begin{align*}
    J_K &\defeq \mathbf{E} \left\{\sum_{k=1}^{K}  \| y_{k} - t \|_{2}^{2}\right\} \\
      &= \sum_{k=1}^{K} \mathbf{E} \left\{ \mathbf{E} \left[ \| d_{k} + B_{k} p_k + \epsilon_k^y - t \|_{2}^{2} | I_k \right]\right\}
\end{align*}
to calculate the optimal policy $p_{k+1,K}^*$.
Note that the objective $J_K$ separates into a sum of quadratic terms. Equating the first-order conditions
for $p_{k+1}$ to $0$ for every $k$ yields the control policy
$$p_{k+1,K}^* = \left[\mathbf{E}\left(B_{k+1}^\intercal B_{k+1}|I_k\right)\right]^{-1}\mathbf{E}\left(B_{k+1}^\intercal (t-d_{k+1})|I_k\right).$$
Then we let $K$ goes to infinity, and we can get
\begin{align*}
    p_{k+1}^* &= \lim_{K \to \infty} p_{k+1,K}^* \\
      &= \left[\mathbf{E}\left(B_{k+1}^\intercal B_{k+1}|I_k\right)\right]^{-1}\mathbf{E}\left(B_{k+1}^\intercal (t-d_{k+1})|I_k\right).
\end{align*}
\qed

\medskip
\noindent\textbf{\sffamily Proof of Theorem \ref{thm-lambda-positive}.} We consider the objective function as a finite sum
$$J_K = \min_{p} \mathbf{E} \left\{\sum_{k=1}^{K} \left( \| y_{k} - t \|_{2}^{2} +\lambda \| p_{k} - p_{k-1} \|_{2}^{2} \right) \right\}$$
to calculate the optimal policy $p_{k+1,K}^*$.
Using the techniques in dynamic programming, we further suppose
$$J_{k,K} \defeq \min_p \mathbb{E}\left\{
\left.  \sum_{s=k+1}^{K}\left(  \left\Vert y_{s}-t\right\Vert
_{2}^{2}+\lambda\left\Vert p_s-p_{s-1} \right\Vert _{2}^{2}\right)  \right\vert
I_{k}\right\}.$$
\begin{lemma}
    We can use induction to show that there exists matrices $Q_{k,K}, R_{k,K} \in\mathbb{R}^{n \times n}$ and vector $\tau_{k,K}\in\mathbb{R}^{n}$ such that
    $$J_{k,K} = p_{k}^\intercal Q_{k,K}p_{k}-2\tau_{k,K}^\intercal p_{k}+2(a_{k+1}^d)^\intercal R_{k,K} p_{k}+H_{k,K},$$
    where $a_{k+1}^d \defeq \mathbb{E}\left(d_{k+1}|I_k \right)$ and $H_{k,K}$ is independent with the choice of decision variables $p$.
    Moreover, $Q_{k,K}, R_{k,K}, \tau_{k,K}$ satisfy
    $$Q_{k-1,K} = \lambda I -\lambda^2(\lambda I + Q_{k,K} + B_k^\intercal B_k)^{-1},$$
    $$R_{k-1,K}^\intercal = \lambda (\lambda I + Q_{k,K} + B_k^\intercal B_k)^{-1} (B_k^\intercal + R_{k,K}^\intercal A^d),$$
    $$\tau_{k-1,K} = \lambda (\lambda I + Q_{k,K} + B_k^\intercal B_k)^{-1} \left(B_k^\intercal t + \tau_{k,K} - R_{k,K}^\intercal(I-A^d)\mu^d\right).$$
\label{lemma-expression-J}
\end{lemma}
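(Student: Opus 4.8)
\medskip
\noindent\textbf{\sffamily Proof plan for Lemma~\ref{lemma-expression-J}.}
The plan is to prove the claim by backward induction on $k$, from $k=K$ down to $k=0$, using the dynamic programming recursion
\[
  J_{k-1,K} = \min_{p_k}\ \mathbf{E}\left[\,\|y_k-t\|_2^2+\lambda\|p_k-p_{k-1}\|_2^2+J_{k,K}\,\mid\,I_{k-1}\,\right],
\]
where the minimum is over $I_{k-1}$-adapted $p_k$. The base case $k=K$ is immediate: the sum defining $J_{K,K}$ is empty, so $J_{K,K}\equiv 0$ and the claimed form holds with $Q_{K,K}=0$, $R_{K,K}=0$, $\tau_{K,K}=0$, $H_{K,K}=0$; the three recursions then generate $Q_{K-1,K},R_{K-1,K},\tau_{K-1,K}$. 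For the inductive step I would assume $J_{k,K}$ has the stated quadratic form and compute the conditional expectation of each of the three summands above, keeping track of which parts are quadratic, linear, or constant in $p_k$ and in $p_{k-1}$.

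Writing $a_k^d\defeq\mathbf{E}(d_k\mid I_{k-1})$: because $B_k$ is deterministic (\Cref{assumption_deterministic_B}) and $p_k$ is $I_{k-1}$-measurable, one has $\mathbf{E}[\|y_k-t\|_2^2\mid I_{k-1}]=\|B_kp_k+a_k^d-t\|_2^2+\tr\bigl(\mathbf{Var}(d_k\mid I_{k-1})+W^y\bigr)$, the trace term being control-free; moreover $Q_{k,K},R_{k,K},\tau_{k,K}$ are deterministic under \Cref{assumption_deterministic_B}, so $\mathbf{E}[J_{k,K}\mid I_{k-1}]$ keeps the $p_k^\intercal Q_{k,K}p_k$ and $-2\tau_{k,K}^\intercal p_k$ terms unchanged. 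After also expanding $\lambda\|p_k-p_{k-1}\|_2^2$ and collecting, the bracketed expectation has the form $p_k^\intercal M_kp_k-2L_k^\intercal p_k+(\text{terms free of }p_k)$ with
\[
  M_k=\lambda I+Q_{k,K}+B_k^\intercal B_k,\qquad L_k=\lambda p_{k-1}+B_k^\intercal(t-a_k^d)+\tau_{k,K}-R_{k,K}^\intercal(I-A^d)\mu^d-R_{k,K}^\intercal A^d a_k^d.
\]
Here $M_k\succeq\lambda I\succ0$ --- a fact I would carry along the induction, since $Q_{k,K}\succeq0$ forces $M_k\succeq\lambda I$, which in turn gives $Q_{k-1,K}=\lambda I-\lambda^2 M_k^{-1}=\lambda(I-\lambda M_k^{-1})\succeq0$ (and positive definite if $B_k$ has full rank) --- so the unique minimizer is $p_k^*=M_k^{-1}L_k$ and the optimal value is $(\text{terms free of }p_k)-L_k^\intercal M_k^{-1}L_k$.

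Since the $p_k$-free part still contains the $\lambda\|p_{k-1}\|_2^2$ left over from expanding $\lambda\|p_k-p_{k-1}\|_2^2$, writing $L_k=\lambda p_{k-1}+g_k$ and expanding $\lambda\|p_{k-1}\|_2^2-L_k^\intercal M_k^{-1}L_k$ produces a quadratic in $p_{k-1}$: its $p_{k-1}^\intercal(\cdot)p_{k-1}$ coefficient is $\lambda I-\lambda^2 M_k^{-1}$ (the $Q_{k-1,K}$ recursion); its $(\cdot)^\intercal p_{k-1}$ part, coming from the $a_k^d$-free piece $g_k^{(0)}=B_k^\intercal t+\tau_{k,K}-R_{k,K}^\intercal(I-A^d)\mu^d$ of $g_k$, gives $\tau_{k-1,K}=\lambda M_k^{-1}g_k^{(0)}$ (the $\tau_{k-1,K}$ recursion); and its $(a_k^d)^\intercal(\cdot)p_{k-1}$ part, coming from the $a_k^d$-piece $-(B_k^\intercal+R_{k,K}^\intercal A^d)a_k^d$ of $g_k$, gives $R_{k-1,K}^\intercal=\lambda M_k^{-1}(B_k^\intercal+R_{k,K}^\intercal A^d)$ (the $R_{k-1,K}$ recursion). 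Everything remaining depends only on $I_{k-1}$ (traces, $a_k^d$, $\mathbf{E}[H_{k,K}\mid I_{k-1}]$) and not on the decision variables, so it is absorbed into $H_{k-1,K}$, completing the induction.

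The step I expect to be the main obstacle is computing $\mathbf{E}[J_{k,K}\mid I_{k-1}]$, specifically the term $2(a_{k+1}^d)^\intercal R_{k,K}p_k$. The difficulty is that the filtered estimate $\hat d_k=\mathbf{E}(d_k\mid I_k)$, and hence $a_{k+1}^d=(I-A^d)\mu^d+A^d\hat d_k$, depends on $y_k$ and therefore on $p_k$ --- via the Kalman update $\hat d_k=(I-K_k)a_k^d+K_k y_k-K_k B_k p_k$ --- so this term is a priori \emph{quadratic} in $p_k$ and threatens to corrupt the coefficient $M_k$. The crux of the argument, which is an instance of the separation principle, is that after substituting the Kalman update and using $\mathbf{E}(y_k\mid I_{k-1})=a_k^d+B_kp_k$, the bilinear-in-$p_k$ contribution $-2p_k^\intercal B_k^\intercal K_k^\intercal(A^d)^\intercal R_{k,K}p_k$ arising from $a_{k+1}^d$'s dependence on $p_k$ is exactly cancelled by the matching term produced by the $y_k$-dependent piece of $a_{k+1}^d$ when the conditional expectation is taken; what survives is only the prediction $\mathbf{E}[a_{k+1}^d\mid I_{k-1}]=(I-A^d)\mu^d+A^d a_k^d$, entering linearly, which is precisely what produces the $R_{k,K}^\intercal A^d a_k^d$ and $R_{k,K}^\intercal(I-A^d)\mu^d$ terms in $L_k$. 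One should also check that no residual $p$-dependence leaks in through $H_{k,K}$, which follows from the same cancellation applied to any quadratic-in-$a_{k+1}^d$ terms hidden there. Once this cancellation is established, the remainder --- expanding the quadratic forms, completing the square, and matching coefficients --- is routine linear algebra.
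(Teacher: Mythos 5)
Your proposal is correct and follows essentially the same route as the paper's proof: backward induction with base case $Q_{K,K}=R_{K,K}=\tau_{K,K}=0$, the quadratic ansatz for $J_{k,K}$, minimization via the first-order condition (equivalently completing the square in $p_k$ with $M_k=\lambda I+Q_{k,K}+B_k^\intercal B_k$), and coefficient matching to obtain exactly the three stated recursions. The additional points you verify --- positive definiteness of $M_k$ and the fact that the $p_k$-dependence of $a_{k+1}^d$ washes out (which the paper handles implicitly, since by the tower property $\mathbf{E}\left[a_{k+1}^d \mid I_{k-1}\right]=(I-A^d)\mu^d+A^d a_k^d$ for any adapted price, and the Kalman innovation $y_k-B_kp_k$ is price-free) --- only make explicit steps the paper leaves tacit.
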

\begin{proof}
    For $k=K$, $J_{K,K} = 0$, we can let $Q_{K,K}=0$, $R_{K,K}=0$, and $\tau_{K,K}=0$.
    Suppose for $k = s$, we have
    $$J_{s,K} = p_{s}^\intercal Q_{s,K}p_{s}-2\tau_{s,K}^\intercal p_{s}+2(a_{s+1}^d)^\intercal R_{s,K} p_{s}+H_{s,K}.$$
    For $k=s-1$, we can first get the optimal policy
    \begin{align*}
    p_{s,K}^* &= \argmin_{p_{s}} \mathbf{E} \left\{ \| y_{s} - t \|_{2}^{2} +\lambda \| p_{s} - p_{s-1} \|_{2}^{2} + J_{s,K} | I_{s-1} \right\} \\
      &= \argmin_{p_{s}} \mathbf{E} \left\{ \| d_{s} + B_s p_s +\epsilon_s^y - t \|_{2}^{2} +\lambda \| p_{s} - p_{s-1} \|_{2}^{2} + p_{s}^\intercal Q_{s,K}p_{s}-2\tau_{s,K}^\intercal p_{s}+2(a_{s+1}^d)^\intercal R_{s,K} p_{s}| I_{s-1} \right\},
    \end{align*}
    by the expression of $J_{s,K}$. It is a quadratic function with respect to the decision variable $p_{s}$. We can get the optimal policy
    $$p_{s,K}^* = -(\lambda I+Q_{s,K}+B_s^\intercal B_s)^{-1}\left(B_s^\intercal a_s^d-B_s^\intercal t-\lambda p_{s-1}-\tau_{s,K}+R_{s,K}^\intercal\left[(I-A^d)\mu^d+A^d a_s^d\right]\right),$$
    by equating its first-order condition to $0$.
    After plugging in and comparing the coefficient, the optimal value
    $$J_{s-1,K} = p_{s-1}^\intercal Q_{s-1,K}p_{s-1}-2\tau_{s-1,K}^\intercal p_{s-1}+2(a_{s}^d)^\intercal R_{s-1,K} p_{s-1}+H_{s-1,K},$$
    where
    $$Q_{s-1,K} = \lambda I -\lambda^2(\lambda I + Q_{s,K} + B_s^\intercal B_s)^{-1},$$
    $$R_{s-1,K}^\intercal = \lambda (\lambda I + Q_{s,K} + B_s^\intercal B_s)^{-1} (B_s^\intercal + R_{s,K}^\intercal A^d),$$
    $$\tau_{s-1,K} = \lambda (\lambda I + Q_{s,K} + B_s^\intercal B_s)^{-1} \left(B_s^\intercal t + \tau_{s,K} - R_{s,K}^\intercal(I-A^d)\mu^d\right).$$
    \qed
\end{proof}
Suppose $Q_k = \lim_{K \to \infty} Q_k^K$, $R_k = \lim_{K \to \infty} R_k^K$, $\tau_k = \lim_{K \to \infty} \tau_k^K$, $H_k = \lim_{K \to \infty} (H_{k,K}-H_{0,K})$, and $\Tilde{J}_k = p_{k}^\intercal Q_{k}p_{k}-2\tau_{k}^\intercal p_{k}+2(a_{k+1}^d)^\intercal R_{k} p_{k}+H_k$.
Taking limit in Lemma \ref{lemma-expression-J}, we have
$$p_{k+1}^* = \argmin_{p_{k+1}} \mathbf{E} \left\{ \| y_{k+1} - t \|_{2}^{2} +\lambda \| p_{k+1} - p_{k} \|_{2}^{2} + \Tilde{J}_{k+1} | I_{k} \right\},$$
\begin{equation}
   \Tilde{J}_{k} = \min_{p_{k+1}} \mathbf{E} \left\{ \| y_{k+1} - t \|_{2}^{2} +\lambda \| p_{k+1} - p_{k} \|_{2}^{2} + \Tilde{J}_{k+1} | I_{k} \right\}.
\label{equation-lim-J}
\end{equation}
We take the partial derivative with respect to $p_k$ on both sides of the equation \ref{equation-lim-J}. Then we have
$$2Q_kp_k - 2\tau_k + 2R_k^\intercal a_{k+1}^d = -2 \lambda (p_{k+1}^*-p_k),$$
which means that
$$p_{k+1}^* =\left( I-\frac{Q_k}{\lambda} \right) p_k + \frac{Q_k}{\lambda} \aim_k,$$
where $\aim_k = Q_k^{-1}(\tau_k-R_k^\intercal a_{k+1}^d).$

We can also get
$$Q_{k} = \lambda I -\lambda^2(\lambda I + Q_{k+1} + B_{k+1}^\intercal B_{k+1})^{-1},$$
$$R_{k}^\intercal = \lambda (\lambda I + Q_{k+1} + B_{k+1}^\intercal B_{k+1})^{-1} (B_{k+1}^\intercal + R_{k+1}^\intercal A^d),$$
$$\tau_{k} = \lambda (\lambda I + Q_{k+1} + B_{k+1}^\intercal B_{k+1})^{-1} \left(B_{k+1}^\intercal t + \tau_{k+1} - R_{k+1}^\intercal(I-A^d)\mu^d\right),$$
after taking limit. By some algebraic manipulations, we can get
$$Q_k^{-1} = \lambda^{-1}(Q_{k+1}+ B_{k+1}^\intercal B_{k+1})^{-1}(\lambda I + Q_{k+1} + B_{k+1}^\intercal B_{k+1}),$$
$$\tau_k-R_k^\intercal a_{k+1}^d = \lambda (\lambda I + Q_{k+1} + B_{k+1}^\intercal B_{k+1})^{-1} \left(
B_{k+1}^\intercal (t - a_{k+1}^d) + \tau_{k+1} - R_{k+1}^\intercal\left((I-A^d)\mu^d + A^d a_{k+1}^d\right)
\right).$$
Thus, we can get
\begin{align*}
    \aim_k &= Q_k^{-1}(\tau_k-R_k^\intercal a_{k+1}^d) \\
           &= (Q_{k+1}+ B_{k+1}^\intercal B_{k+1})^{-1} \left( B_{k+1}^\intercal (t - a_{k+1}^d) + \tau_{k+1} - R_{k+1}^\intercal\left((I-A^d)\mu^d + A^d a_{k+1}^d\right)\right) \\
           &= (B_{k+1}^\intercal B_{k+1}+Q_{k+1})^{-1} \left( B_{k+1}^\intercal B_{k+1} \Bar{p}_{k+1} + Q_{k+1} \mathbf{E}(\aim_{k+1}|I_k)\right),
\end{align*}
where $\Bar{p}_{k+1} \defeq B_{k+1}^{-1}(t - a_{k+1}^d)$ satisfying $\mathbf{E}(y_{k+1}|I_{k},p_{k+1} = \Bar{p}_{k+1})=t.$ is the market clearing price at block $k+1$. We can obtain the result stated in Theorem \ref{thm-lambda-positive} by employing the same technique to represent $\alpha_{k+1}$ and iterating this process repeatedly. \qed

\medskip
\noindent\textbf{\sffamily Proof of Theorem \ref{thm:eigen}.} By Theorem \ref{thm-lambda-0}, the optimal policy
\begin{align*}
    p_{k+1}^* &= \left[\mathbf{E}\left(B_{k+1}^\intercal B_{k+1}|I_k\right)\right]^{-1}\mathbf{E}\left(B_{k+1}^\intercal (t-d_{k+1})|I_k\right). \\
    &= \left[\mathbf{E}\left(V\diag(\delta_{k+1})U^\intercal U\diag(\delta_{k+1})V^\intercal|I_k\right)\right]^{-1}\mathbf{E}\left(V\diag(\delta_{k+1})U^\intercal (U\Tilde{t}-U\Tilde{d}_{k+1})|I_k\right)\\
    &= V \left[\mathbf{E}\left(\diag(\delta_{k+1})^2|I_k\right)\right]^{-1}\mathbf{E}\left(\diag(\delta_{k+1})(\Tilde{t}-\Tilde{d}_{k+1})|I_k\right).
\end{align*}
By the eigenprice $\Tilde{p}_{k+1}^* = V^\intercal p_{k+1}^*$, we can obtain
$$\Tilde{p}_{k+1}^* = \left[\mathbf{E}\left(\diag(\delta_{k+1})^2|I_k\right)\right]^{-1}\mathbf{E}\left(\diag(\delta_{k+1})(\Tilde{t}-\Tilde{d}_{k+1})|I_k\right),$$
which is equivalent to Theroem \ref{thm:eigen}. \qed

%%% Local Variables:
%%% mode: latex
%%% TeX-master: "lqr-tx-fees"
%%% End:

\section{Empirical Models}\label{app:empirical}

We introduce the Expectation-Maximization (EM) algorithm to estimate the parameters in our model using real data. The EM algorithm is a two-step iterative method designed for finding maximum likelihood estimates of parameters in probabilistic models with incomplete data. It is especially useful in situations where direct optimization of the likelihood is computationally challenging or not feasible.

In our study, the EM algorithm is applicable to both our baseline model and eigenresources model. For clarity and brevity, we will take the baseline model as an illustrative example to introduce how the EM algorithm works.

\medskip
\noindent\textbf{\sffamily Model review.} Demand dynamics
$$d_{k+1} = (I-A^{d})\mu^{d} + A^{d} d_{k} + \epsilon_k^{d}.$$
Price-sensitivity matrix
$$\mathbf{vec}(B_{k+1}) = (I-A^{B})\mu^{B}+A^{B}\mathbf{vec}({B_k})+\epsilon_k^{B}.$$
Observed demand
$$y_{k} = d_{k} + B_{k} p_{k} + \epsilon_{k}^{y}.$$

\medskip
\noindent\textbf{\sffamily Hidden state and observed state.} The hidden state in our model is $x_k^\intercal = \left[ d_k^\intercal \  \mathbf{vec}(B_k)^\intercal \right].$ It evolves according to
$$x_{k+1}=(\mathrm{I}-A^{x})\mu^{x}+A^{x}x_k+\epsilon_k^{x}, \quad\quad \epsilon_k^{x} \sim \mathcal{N}(0, W^{x}).$$
The observed state in our model is $y_k$. It satisfies
$$y_k = C_k x_k + \epsilon_k^y, \quad\quad \epsilon_k^{y} \sim \mathcal{N}(0, W^{y}).$$

\medskip
\noindent\textbf{\sffamily Parameters.} The parameter that need to be estimated in our model is
$$\theta \defeq (\mu^x, A^x, W^x, W^y, a_0, S_0),$$
where $x_0$ and $S_0$ represent the normal-mean and normal-variance for the initial hidden state respectively.

\medskip
\noindent\textbf{\sffamily EM algorithm.}
\begin{enumerate}
    \item \textbf{Initialization:} Choose initial values $\theta_0$ for the parameters.
    \item \textbf{E-Step (Expectation):} Given the current parameters $\theta_t$, compute the expectation of the log likelihood 
    $$l(\theta|\theta_t) \defeq \mathbf{E}\left\{log(x_{0:T},y_{0:T};\theta)|x_{0:T},y_{0:T} \sim p(\theta_t), I_T\right\}$$
    with respect to the hidden state and observed state.
    \item \textbf{M-Step (Maximization):} Maximize the expected log likelihood found in the E-step with respect to the parameter $\theta$, and update the parameter to this new value, i.e.,
    $\theta_{t+1} = \argmax_{\theta} l(\theta|\theta_t).$
\end{enumerate}
\noindent Repeat the E-Step and M-Step until the algorithm converges.

%%% Local Variables:
%%% mode: latex
%%% TeX-master: "lqr-tx-fees"
%%% End:

\end{document}